\documentclass[acmsmall]{acmart}\settopmatter{printfolios=true,printccs=false,printacmref=false}

\renewcommand\footnotetext[1]{} 
\renewcommand\footnotetextauthorsaddresses[1]{} 
\renewcommand\footnotetextcopyrightpermission[1]{} 
\pagestyle{plain} 
\fancypagestyle{firstpagestyle}{
  \fancyhf{}
}


\setcopyright{none}

\sloppy 
\bibliographystyle{my-ACM-Reference-Format}
\setcitestyle{nosort}


\usepackage[author=anonymous,nomargin,marginclue,footnote,status=final]{fixme}
\FXRegisterAuthor{ls}{als}{LS}
\FXRegisterAuthor{dp}{atm}{dp}
\FXRegisterAuthor{bk}{abk}{bk}
\FXRegisterAuthor{pw}{apw}{PW}

\usepackage{booktabs}   
\usepackage{subcaption} 

\usepackage{mathtools}

\usepackage[all]{xy}

\newcommand{\ST}{\mathsf{ST}}

\newcommand{\Rat}{\mathbb{Q}}
\newcommand{\RatI}{\Rat\cap[0,1]}

\newcommand{\At}{\mathsf{At}}
\newcommand{\Act}{\mathsf{Act}}

\newcommand{\ball}[3]{B_{#2}({#3})}

\newcommand{\id}{\mathsf{id}}

\newcommand{\Set}{\mathsf{Set}}

\theoremstyle{plain}
\newtheorem{thm}{Theorem}[section]
\newtheorem{lem}[thm]{Lemma}

\theoremstyle{definition}
\newtheorem{defn}[thm]{Definition}
\newtheorem{expl}[thm]{Example}

\newtheorem{rem}[thm]{Remark}

\newcommand{\nonexp}[2]{#1 \to_1 #2}
\newcommand{\supnorm}[1]{\lVert #1 \rVert_\infty}
\newcommand{\nbhood}[2]{U^{#1}(#2)}
\newcommand{\CA}{\mathcal{A}}
\newcommand{\CB}{\mathcal{B}}
\newcommand{\CC}{\mathcal{C}}
\newcommand{\CF}{\mathcal{F}}
\newcommand{\qr}{\mathsf{qr}}
\newcommand{\rk}{\mathsf{rk}}
\newcommand{\dif}{\,\mathrm{d}}
\newcommand{\intsuc}[2]{\textstyle{\int} #2 \dif\pi_{#1}}
\newcommand{\dfun}{\mathcal{D}}

\newcommand{\modf}[1]{\mathcal{L}_{#1}}
\newcommand{\diabind}[3]{#1 \Diamond\lceil #2: #3\rceil}
\newcommand{\last}{\mathsf{last}}

\begin{document}

\title[A Probabilistic van Benthem Theorem]{A van Benthem Theorem\\ for Quantitative Probabilistic Modal Logic}         


\author{Paul Wild}
\affiliation{
  \institution{Friedrich-Alexander-Universit\"at Erlangen-N\"urnberg}
  \country{Germany}
}
\email{paul.wild@fau.de}

\author{Lutz Schr\"oder}
\affiliation{
  \institution{Friedrich-Alexander-Universit\"at Erlangen-N\"urnberg}
  \country{Germany}
}
\email{lutz.schroeder@fau.de}

\author{Dirk Pattinson}
\affiliation{
  \institution{Australian National University, Canberra}
  \country{Australia}
}
\email{dirk.pattinson@anu.edu.au}
\author{Barbara K\"onig}
\affiliation{
  \institution{Universit\"at Duisburg-Essen}
  \country{Germany}
}
\email{barbara_koenig@uni-due.de}

\begin{abstract} In probabilistic transition systems, behavioural metrics provide a more fine-grained and stable measure of system equivalence than crisp notions of bisimilarity. They correlate strongly to quantitative probabilistic logics, and in fact the distance induced by a probabilistic modal logic taking values in the real unit interval has been shown to coincide with behavioural distance. For probabilistic systems, probabilistic modal logic thus plays an analogous role to that of Hennessy-Milner logic on classical labelled transition systems. In the quantitative setting, invariance of modal logic under bisimilarity becomes non-expansivity of formula evaluation w.r.t.\ behavioural distance. In the present paper, we provide a characterization of the expressive power of probabilistic modal logic based on this observation: We prove a probabilistic analogue of the classical van Benthem theorem, which states that modal logic is precisely the bisimulation-invariant fragment of first-order logic. Specifically, we show that quantitative probabilistic modal logic lies dense in the bisimulation-invariant fragment, in the indicated sense of non-expansive formula evaluation, of quantitative probabilistic first-order logic; more precisely, bisimulation-invariant first-order formulas are approximable by modal formulas of bounded rank.

For a description logic perspective on the same result,
  see~\cite{wspk:modal-charact-prob-fuzzy-arxiv}.
\end{abstract}

\begin{CCSXML}
<ccs2012>
<concept>
<concept_id>10011007.10011006.10011008</concept_id>
<concept_desc>Software and its engineering~General programming languages</concept_desc>
<concept_significance>500</concept_significance>
</concept>
<concept>
<concept_id>10003456.10003457.10003521.10003525</concept_id>
<concept_desc>Social and professional topics~History of programming languages</concept_desc>
<concept_significance>300</concept_significance>
</concept>
</ccs2012>
\end{CCSXML}

\ccsdesc[500]{Software and its engineering~General programming languages}
\ccsdesc[300]{Social and professional topics~History of programming languages}


\maketitle

\section{Introduction}

Probabilistic transition systems, or partial Markov chains, serve as a
quantitative model of concurrent systems (see~\cite{vanGlabbeekEA95}
for an overview of probabilistic models of concurrency). Probabilistic
systems can be compared under standard two-valued (\emph{crisp})
notions of bisimilarity~\cite{LarsenSkou89,BluteEA97} under which two
states are either bisimilar or not, but it has been observed
previously~\cite{GiacaloneEA90} that in many respects, quantitative
measures of process equivalence are more suitable in this setting:
Probabilistic systems may, e.g., differ slightly in the values of
individual probabilities or contain mutually deviating but very
unlikely transitions, and in such cases one would like to have the
possibility of saying that two processes are almost the same, or in
fact quantifying their degree of distinctness. This has motivated the
introduction of \emph{behavioural metrics} measuring the
\emph{behavioural distance} between states in probabilistic
systems~\cite{GiacaloneEA90,DesharnaisEA99,BreugelWorrell05,DesharnaisEA08,bbkk:behavioral-metrics-functor-lifting,CastiglioniEA16}. More
precisely, these distance functions are \emph{pseudometrics}, i.e.\
distinct states can have distance zero, namely if they are exactly
bisimilar.

From the outset, both crisp probabilistic bisimilarity and behavioural
metrics have been related to suitable modal logics. Larsen and
Skou~\cite{LarsenSkou89} introduce a modal logic featuring modalities
$\Diamond_p$, with $\Diamond_p\phi$ read `with probability at
least~$p$, the state reached in the next step will
satisfy~$\phi$'. This logic thus has a two-valued semantics, and we
refer to it as \emph{crisp probabilistic modal logic}. It is easy to
see that this logic is \emph{bisimulation-invariant}, i.e.\ if two
states are probabilistically bisimilar then they satisfy the same
modal formulas; Larsen and Skou show that the converse holds as well
under additional assumptions on the underlying systems, which amounts
to a probabilistic Hennessy-Milner theorem. In the more fine-grained
setting of behavioural metrics, bisimulation invariance becomes
non-expansivity w.r.t.\ behavioural distance -- to see the connection,
consider crisp bisimilarity as a $\{0,1\}$-valued discrete behavioural
distance and observe that in this view, a map from the state space
into a set $\{0,1\}$ of crisp truth values is bisimulation-invariant
iff it is non-expansive w.r.t.\ this distance. Van Breugel and
Worrell~\cite{BreugelWorrell05} introduce a behavioural
\mbox{(pseudo-)}metric on probabilistic transition systems
(generalizing previous work), along with a \emph{quantitative
  probabilistic modal logic} taking values in the unit interval
(closely related to logics previously introduced by
Kozen~\cite{Kozen85} and Desharnais et al.~\cite{DesharnaisEA99})
whose key feature is a modality taking expected truth values. This
logic, a fragment of the probabilistic
$\mu$-calculus~\cite{CleavelandEA05,HuthKwiatkowska97}, is
\emph{bisimulation-invariant} in the mentioned sense, i.e.\
non-expansive w.r.t.\ behavioural distance. Moreover, the pseudometric
on states induced by the logic in fact coincides (up to a constant
factor) with the behavioural metric, a quantitative version of the
Hennessy-Milner theorem.

Classically, i.e.\ for modal logics of relational structures, one has
a second converse to bisimulation invariance of modal logic besides
the Hennessy-Milner theorem: The \emph{van Benthem
  theorem}~\cite{BenthemThesis} asserts that every
bisimulation-invariant first-order property is in fact expressible in
modal logic -- i.e.\ modal logic is as expressive as it can be, given
that it embeds into first-order logic and is
bisimulation-invariant. This result can be viewed as saying that modal
logic provides effective syntax for bisimulation-invariant first-order
properties -- something that first-order logic itself does not, as it
is undecidable whether a given first-order formula is
bisimulation-invariant~\cite{Otto06}.

Our main result in the present paper is a corresponding expressive
completeness result for quantitative probabilistic modal logic. Before
we discuss the statement of this result in more detail, we recall two
earlier results for logics that are related, in orthogonal dimensions,
to our target logic:
\begin{itemize}
\item The only currently known expressive completeness result for
  \emph{crisp} probabilistic modal logic states that every
  bisimulation-invariant property expressible in a natural variant of
  (crisp) probabilistic first-order logic is expressible in crisp
  probabilistic modal logic \emph{extended with infinite conjunction}
  (hence also infinite disjunction) by a formula of \emph{bounded
    modal rank}. (This is an instance of a general result established
  in coalgebraic
  logic~\cite{SchroderPattinson10b,LitakEA12,SchroderEA17}.)
\item A recently established expressive completeness result for a
  simple quantitative logic, \emph{fuzzy modal logic} with Zadeh
  semantics of the propositional connectives, states that every
  property that is bisimulation-invariant in the sense of being
  non-expansive w.r.t.\ the natural notion of behavioural distance and
  moreover expressible in fuzzy first-order logic can be
  \emph{approximated} by formulas in fuzzy modal logic \emph{of
    bounded modal rank}~\cite{WildEA18}.
\end{itemize}
One sees an apparent analogy between infinite conjunctions and
approximation. The bound on the rank is essential in two senses:
First, without it, the statement becomes, in both cases, morally a
direct consequence of the much simpler Hennessy-Milner theorem, and
then in fact applies to arbitrary bisimulation-invariant properties
rather than only first-order definable ones. This is already true in
the case of relational labelled transition systems: By the standard
Hennessy-Milner theorem, \emph{every} bisimulation-invariant property,
first-order definable or not, is definable in Hennessy-Milner logic
with infinite conjunction. (This follows simply from the fact that a
bisimulation-invariant property is a union of bisimilarity equivalence
classes, and by the Hennessy-Milner theorem each such class can be
described by the infinite conjunction of all modal formulas satisfied
by the states in the class.) An analogous statement holds for crisp
probabilistic modal logic (by Larsen and Skou's probabilistic variant
of the Hennessy-Milner theorem~\cite{LarsenSkou89}); for quantitative
probabilistic modal logic, van Breugel and
Worrell~\cite{BreugelWorrell05} similarly show that \emph{every}
bisimulation-invariant property in the metric sense can be
approximated by modal formulas, again as a direct consequence of their
quantitative analogue of the Hennessy-Milner theorem. Second, in the
classical case of labelled transition systems (or Kripke models), the
bound on the rank is actually the core of the van Benthem theorem:
Once one knows that every bisimulation-invariant first-order definable
property is definable by a modal formula with infinite conjunction but
of bounded rank, the actual van Benthem theorem, i.e.\ definability by
a \emph{finitary} modal formula, is immediate since (under the
standard assumption that the modal language has only finitely many
atoms, which is w.l.o.g.\ and made in all existing proofs of the
theorem) there are, up to logical equivalence, only finitely many
formulas of a given bounded rank. Summing up, the bound on the rank is
the key part of the van Benthem theorem.

Correspondingly, our main result, i.e.\ the announced expressive
completeness result for quantitative probabilistic modal logic, states
that
\begin{quotation}
  \emph{every bisimulation-invariant property that is definable in
    quantitative probabilistic first-order logic is approximable by
    quantitative probabilistic modal formulas of bounded rank}.
\end{quotation}
Again, bisimulation invariance is to be understood in the sense of
non-expansiveness w.r.t.\ behavioural distance. Quantitative
probabilistic first-order logic is a natural first-order extension of
quantitative probabilistic modal logic that we introduce here; its
syntax and semantics are modelled on coalgebraic predicate
logic~\cite{LitakEA12} and ultimately Chang's modal predicate
logic~\cite{Chang73}, but we replace the original two-valued notion of
satisfaction with a quantitative notion. Simultaneously, quantitative
probabilistic modal logic can be seen as a quantitative variant of
Halpern's (crisp) \emph{type-$1$} (or \emph{statistical})
probabilistic first-order logic~\cite{Halpern90}.

Technically, we base our proof on a strategy put forward by
Otto~\cite{o:van-Benthem-Rosen-elementary} and used in many recent
van-Benthem type results (including~\cite{SchroderEA17,WildEA18}): One
shows using a suitable notion of Ehrenfeucht-Fra\"iss\'e equivalence
that every bisimulation-invariant first-order property is
\emph{local}, i.e.\ depends only on a bounded neighbourhood under an
adapted notion of Gaifman distance, then concludes by an unravelling
construction that the target property is in fact invariant under
bounded-depth bisimilarity, and finally shows that all such invariant
properties are approximable by modal formulas of bounded depth. Unlike
in the classical case, where the statement that all properties that
are invariant under $k$-bounded bisimulation are definable by a modal
formula of modal rank~$k$ is next to trivial, the last step is in fact
the key part of this programme in the quantitative setting, and
presumably of independent interest. In particular, modal
approximability in bounded rank holds for bounded-depth
bisimulation-invariant properties irrespective of their first-order
definability, a statement that certainly cannot be improved to
on-the-nose modal definability. 

Besides newly introduced notions of Ehrenfeucht-Fra\"iss\'e
equivalence and Gaifman distance on probabilistic transition systems,
the key tool in the proof is a notion of up-to-$\epsilon$ bisimulation
game for probabilistic transition systems. The pseudometric induced by
this game coincides with logical distance and (hence) with behavioural
distance in the sense of van Breugel and
Worrell~\cite{BreugelWorrell05}. The proof of this equivalence is
based on Kantorovich-Rubinstein duality (see
also~\cite{bbkk:behavioral-metrics-functor-lifting,BreugelEA08,Breugel17});
essentially, the logical distance relates to Kantorovich distance, and
the game distance to Wasserstein distance, and Kantorovich-Rubinstein
duality guarantees that these distances are equal. Our games differ
substantially both from the bisimulation games up to~$\epsilon$
previously considered by Desharnais et al.~\cite{DesharnaisEA08} and
from the much simpler games used by Wild et al.~\cite{WildEA18} for
the case of fuzzy relational systems in several respects; in
particular, in our game the allowed deviation~$\epsilon$ changes in
the course of a play. Also, the behavioural distance induced by our
games is incomparable to that induced by that of Desharnais et
al. (who already show that their distance is incomparable to van
Breugel and Worrell's~\cite[Examples~7 and~8]{DesharnaisEA08}, and
relates more closely to logical distances induced by \emph{crisp}
probabilistic modal logics).

The material is organized as follows. We introduce the relevant logics
in Section~\ref{sec:logics}; that is, we recall the definition of
quantitative probabilistic modal logic and newly introduce its
first-order extension. In Section~\ref{sec:games}, we introduce
various notions of behavioural distance, including our notion of
bisimulation game up to~$\epsilon$ as well as Kantorovich and
Wasserstein distances, both based on fixed point definitions. As
indicated above, the key stepping stone towards our probabilistic van
Benthem theorem is the result that \emph{all} properties that are
invariant under bounded depth bisimulation are approximable by modal
formulas of bounded rank, proved using Kantorovich-Rubinstein duality
in Section~\ref{sec:modal-approx}. We introduce our notion of
Ehrenfeucht-Fra\"iss\'e equivalence and Gaifman distance and
subsequently prove locality of bisimulation-invariant quantitative
probabilistic first-order formulas in
Section~\ref{sec:locality}. Finally, we prove our main result, the van
Benthem theorem for quantitative probabilistic modal logic, in
Section~\ref{sec:main}.

\paragraph{Related Work} Our work owes much to Wild et al.'s recent
formulation and proof of a van Benthem theorem for fuzzy modal logic
as cited above~\cite{WildEA18}. This result is set in a much simpler
framework where the logic is interpreted over fuzzy relational models,
which differ from classical labelled transition systems by assigning
truth degrees in $[0,1]$ to propositions and transitions but unlike
probabilistic models do not impose restrictions on the sum of truth
degrees. Also, modalities are interpreted in the fuzzy setting by just
taking infima and suprema, respectively, rather than expected truth
values as in (quantitative) probabilistic modal logic. Summing up,
quantitative probabilistic modal logic is semantically more complex
than fuzzy modal logic in a) involving real arithmetic as opposed to
just lattice operations, and b) consequently not allowing for a
separate treatment of successors, precisely because it involves
summation over sets of successors. Technically, this is reflected
mainly in the more complicated structure of behavioural metrics and
bisimulation games; in particular, unlike in the fuzzy relational case
we need to include a Wasserstein formulation of behavioural distance.

As shown by Rosen~\cite{Rosen97}, the classical van Benthem theorem
holds also over finite structures; although we use a proof strategy
that covers the finite case in the classical setting, we currently
leave open the question whether our main result remains true over
finite probabilistic transition systems (essentially, constructions
that produce finite structures in the classical case become infinite
in the presence of infinitely many truth values).

For two-valued logic, van-Benthem-type theorems, also known as
\emph{modal characterization theorems}, abound, having been
established e.g.\ for logics with frame conditions~\cite{DawarOtto05},
neighbourhood logic~\cite{HansenEA09}, fragments of
XPath~\cite{tenCateEA10,FigueiraEA15,AbriolaEA17}, modal $\mu$-calculi
(within monadic second order
logics)~\cite{JaninWalukiewicz95,EnqvistEA15}, PDL (within weak chain
logic)~\cite{Carreiro15}, modal first-order
logics~\cite{Benthem01,SturmWolter01} (within first-order
correspondence languages), and two-dimensional modal logics with an
$S5$-modality~\cite{WildSchroder17} (within $S5$ modal first-order
logic). We are not aware of previous modal characterization theorems
in the quantitative setting other than the mentioned work on the fuzzy
case~\cite{WildEA18}.

We have already mentioned work on behavioural distance in
probabilistic
systems~\cite{GiacaloneEA90,DesharnaisEA99,BreugelWorrell05,DesharnaisEA08,bbkk:behavioral-metrics-functor-lifting,CastiglioniEA16};
concretely, the Kantorovich-style discount-free notion of behavioural
distance that we use here goes back to work by van Breugel et
al.~\cite{BreugelEA07}.

Our probabilistic Ehrenfeucht-Fra\"iss\'e games (but not our
bisimulation games) are related to corresponding games introduced by
Makowski and Ziegler in the context of topological first-order
logic~\cite{MakowskyZiegler80} as well as to probabilistic
bisimulation games used by Desharnais et al.~\cite{DesharnaisEA08}, in
that they include rounds where sets of states are played in
intermediate configurations, with the main difference being that our
games involve fuzzy subsets, rather than crisp ones as in the cited
work.

\section{Quantitative Probabilistic Logics}\label{sec:logics}

We proceed to introduce the logics featuring in our main result,
quantitative probabilistic modal logic and quantitative probabilistic
first-order logic. Both logics are interpreted over
\emph{probabilistic transition systems}, which we often refer to just
as \emph{models}. We allow for infinite transition systems but, like
existing work on quantitative probabilistic modal
logic~\cite{BreugelWorrell05}, restrict to discrete probability
distributions over successors at each state.

Explicitly, we fix a set~$\At$ of \emph{(propositional) atoms}; then a
\emph{probabilistic transition system}
\begin{equation*}
\CA = (A,(p^\CA)_{p\in\At},\pi^\CA)
\end{equation*}
consists of a set $A$ of \emph{states}, a valuation map
$p^\CA\colon A\to[0,1]$ for every atom $p$, and a map
$\pi^\CA\colon A\times A\to[0,1]$ (we will write $\pi$ instead of
$\pi^\CA$ when the model is clear from the context) such that for each
$a\in A$, the map
\[ \pi_a\colon A\to[0,1], \quad \pi_a(a') = \pi(a,a') \]
is either zero or is a discrete probability measure on $A$, i.e.
\[\sum_{a'\in A}\pi_a(a') \in \{0,1\}\]
(implying in the latter case that the \emph{support}
$\{a'\in A\mid \pi_a(a')>0\}$ of $\pi_a$ is at most countable). We call
a state~$a$ \emph{terminating} if $\sum_{a'\in A}\pi_a(a') = 0$, and
\emph{transient} otherwise. At transient states, $\pi$ thus acts as a
probabilistic transition relation.  Whenever a model is designated by
a calligraphic letter, such as~$\CA$, we will always implicitly
assume that the corresponding roman letter, e.g.~$A$, designates the
set of states.

\begin{rem}\label{rem:systems}
  The probabilistic transition systems that we define above can be
  seen as Markov chains extended with propositional atoms and the
  possibility of termination. They deviate from the ones considered in
  previous work on quantitative probabilistic modal
  logic~\cite{BreugelWorrell05} in three mostly inessential ways: a)
  We consider only one probabilistic transition relation. This is
  purely in the interest of readability; a generalization to several
  probabilistic transition relations indexed over a set of actions
  amounts to no more than adding more indices. b) For the sake of
  generality, we have added propositional atoms; the set of
  propositional atoms is a parameter of the setup, so the model
  without atoms is a special case. c) Instead of using
  subdistributions over successor states, i.e.\ requiring
  $\sum_{a'\in A}\pi_a(a') \le 1$, we more specifically require either
  a distribution (total weight~$1$) or termination (total weight~$0$)
  at each state. This is for technical convenience and clarity in
  presenting the proofs, in particular the bisimulation game; minor
  technical modifications required to cover also the model based on
  unrestricted subdistributions are summarized in
  Remark~\ref{rem:subdistributions}.
\end{rem}

\subsection{Quantitative Probabilistic Modal Logic}
\noindent We next recall the syntax and semantics of quantitative
probabilistic modal logic, following van Breugel and
Worrell~\cite{BreugelWorrell05}. \emph{Formulas} $\phi,\psi,\dots$ of
the logic are defined by the grammar
\begin{equation*}
  \phi,\psi::= c\mid p\mid\phi\ominus c\mid \neg\phi\mid\phi\land\psi
  \mid \Diamond \phi
\end{equation*}
where $c\in\RatI$, and $p\in\At$ ranges over propositional
atoms. Maybe slightly deviating from standard practice, we define the
\emph{(modal) rank} of a modal formula~$\phi$ as the maximal nesting
depth of~$\Diamond$ \emph{and propositional atoms} in~$\phi$; e.g.\
$\Diamond\Diamond p\land\Diamond q$ has rank~$3$ (since~$p$
contributes~$1$ to the rank). We denote the rank of a formula $\phi$
by~$\rk\phi$ and the set of all formulas of rank at most~$n$
by~$\modf{n}$.

A formula~$\phi$ evaluates to a probabilistic truth value
\begin{equation*}
  \phi(a)\in[0,1]
\end{equation*}
at a state~$a$ in a probabilistic transition system~$\CA$. Conjunction
is interpreted by taking minima and negation by taking
complementary probability, while $\ominus$ is subtraction truncated
at~$0$. The modal operator~$\Diamond$ takes expected truth
values. Formally, we define $\phi(a)$ recursively by
\begin{align*}
  c(a) & = c \\ 
  p(a) & = p^\CA(a) \\(\phi\ominus c)(a) & = \max(\phi(a)-c,0) \\
  (\neg\phi)(a) &=1-\phi(a) \\ (\phi\land\psi)(a) & = \min(\phi(a),\psi(a))\\
  (\Diamond \phi)(a) & \textstyle=\intsuc{a}{\phi}.
\end{align*}
Note that we generally use integral notation for readability, although
given that all distributions are discrete, these integrals are
actually just infinite sums; e.g., in the above case,
\begin{equation*}
  \intsuc{a}{\phi}=\sum_{a'\in A}
  \phi(a')\cdot\pi_a(a')=\sum_{a'\in A}
  \phi(a')\cdot\pi(a,a')
\end{equation*}
is the expected truth value of~$\phi$ for a random successor of~$a$,
distributed according to~$\pi_a$. We define disjunction~$\lor$ as the
dual of~$\land$ as usual, so that~$\lor$ takes maxima.
\begin{rem}\label{rem:box-examples}
  We note that the dual~$\Box$ of~$\Diamond$ defined by
  $\Box\phi=\neg\Diamond\neg\phi$ differs from $\Diamond$ only at
  terminating states: For~$a$ terminating, we have
  $(\Diamond\phi)(a)=0$ for all~$\phi$ and hence $(\Box\phi)(a)=1$ for
  all~$\phi$, while for~$a$ transient, we have
  \begin{equation*}
    (\Box\phi)(a)=1-\intsuc{a}{(1-\phi)}=1-\intsuc{a}{1}+\intsuc{a}{\phi}=\intsuc{a}{\phi}=(\Diamond\phi)(a).
  \end{equation*}
  For instance, the formula $\Diamond\Box 0$ gives the probability of
  reaching a terminating state (note that the expected value of a
  $\{0,1\}$-valued function~$f$ is just the probability of
  $f^{-1}[\{1\}]$), so
  \begin{equation*}
    \Diamond\Diamond\Box 0
  \end{equation*}
  gives the expected value, taken over successor states in the first
  step, of the probability of reaching, in the second step, a
  terminating state.
\end{rem}
\begin{rem}
  The probabilistic
  $\mu$-calculus~\cite{CleavelandEA05,HuthKwiatkowska97} extends the
  above syntax by adding fixed point operators and moreover interprets
  conjunction as \L{}ukasiewicz fuzzy conjunction~$\otimes$, given by
  $r\otimes q=\max(r+q-1,0)$, instead of as minimum. The latter
  interpretation (as minimum) is referred to as \emph{Zadeh semantics}. Zadeh
  semantics is well-known to embed into \L{}ukasiewicz semantics by a
  simple translation, so as indicated above quantitative probabilistic
  modal logic (without~$\ominus$) is a fragment of the probabilistic
  $\mu$-calculus. The main reason that Zadeh instead of \L{}ukasiewicz
  semantics is used in the present work and also by van Breugel and
  Worrell~\cite{BreugelWorrell05} is that \L{}ukasiewicz conjunction
  would fail to be non-expansive (e.g.\ $r\otimes r=\max(2r-1,0)$).  In
  fact, as already pointed out by Wild et al.~\cite{WildEA18} it would
  make logical distance discrete as it allows for arbitrary
  amplification of small deviations of truth degrees. Additional
  difficulties would arise with Kantorovich distance.
\end{rem}
\subsection{Quantitative Probabilistic First-Order Logic}

\noindent As the first-order correspondence language of quantitative
probabilistic modal logic, we now proceed to introduce
\emph{quantitative probabilistic first-order logic}, with
\emph{formulas} $\phi,\psi,\dots$ defined by the grammar 
\begin{equation*}
  \phi,\psi::= c\mid p(x)\mid x=y\mid\phi\ominus c\mid \neg\phi\mid\phi\land\psi
  \mid \exists x.\,\phi\mid \diabind{x}{y}{\phi}.
\end{equation*}
Again, $p$ ranges over propositional atoms and~$c$ over $\RatI$
while~$x$ and~$y$ range over a fixed countably infinite reservoir of
\emph{variables}. The semantics of the propositional part is
essentially as in the modal logic. Equality is crisp. Existential
quantification is interpreted by taking suprema, and formulas of the
form $\diabind{x}{y}{\phi}$ denote the expected truth value of~$\phi$
at successors~$y$ of~$x$. We have the expected notions of free and
bound variables, under the additional proviso that~$y$ (but not~$x$!)
is bound in $\diabind{x}{y}{\phi}$. The \emph{(quantifier) rank} of a
formula~$\phi$ is the maximal nesting depth of the variable-binding
operators~$\exists$ and~$\Diamond$ and propositional atoms~$p$
in~$\phi$; e.g.~$\exists x.\,\diabind{x}{y}{p(y)}$ has rank~$3$. We
denote the quantifier rank of a probabilistic first-order formula
$\phi$ by $\qr(\phi)$.

Formally, we define the semantics of the logic by assigning a truth
value $\phi(\bar a)\in[0,1]$ to a formula $\phi(x_1,\dots,x_n)$ with
free variables at most $x_1,\dots,x_n$, depending on a probabilistic
transition system $\CA = (A,(p^\CA)_{p\in\At},\pi^\CA)$ and a vector
$\bar a=(a_1,\dots,a_n)\in A^n$ of values for the free variables. We
define $\phi(\bar a)$ recursively by essentially the same clauses as
in quantitative probabilistic modal logic for the propositional
constructs, and
\begin{align*}
  p(x_i)(\bar a) & = p^\CA(a_i)\\
  (x_i=x_j)(\bar a) & = 1\text{ if $a_i=a_j$, and $0$ otherwise}\\
  (\exists x_0.\,\phi(x_0,x_1,\dots,x_n))(\bar a) 
                 &  = \textstyle \bigvee_{a_0\in A}\phi(a_0,a_1,\dots, a_n)\\
  (\diabind{x_i}{y}{\phi(y,x_1,\dots,x_n)})(\bar a) 
                 & = \intsuc{a_i}{\phi(\,\cdot\,,a_1,\dots,a_n)}.
\end{align*}
\begin{expl}
  In quantitative probabilistic first-order logic, we can express the
  transition probability from~$x$ to~$y$ as $\diabind{x}{z}{z=y}$, and
  the probability of a finite set $\{y_1,\dots,y_n\}$ as
  $\diabind{x}{z}{z=y_1\lor\dots\lor z=y_n}$. The formula
  $\phi=\diabind{x}{z}{\diabind{z}{w}{w=y}}$ denotes the expected
  probability, in the next step, of reaching~$y$ after another step,
  which indeed coincides with the more intuitive reading of~$\phi$ as
  the probability of reaching~$y$ from~$x$ in two independently
  distributed steps. The formula $\exists y.\,\diabind{x}{z}{z=y}$
  denotes, roughly, the probability of the most probable successor
  of~$x$, or more precisely the supremum over the probabilities of all
  successors.
\end{expl}
\noindent We have a \emph{standard translation}~$\ST$ from
quantitative probabilistic modal logic into quantitative probabilistic
first-order logic. As in the classical case, $\ST$ is indexed over a
variable~$x$ representing the current evaluation point. For a modal
formula~$\phi$, we define $\ST_x(\phi)$ recursively by
\begin{align*}
  \ST_x(p) & = p(x) \\
  \ST_x(\Diamond\phi) & = \diabind{x}{y}{\ST_y(\phi)},
\end{align*}
and by commutation with all other constructs. An easy induction shows
that the standard translation preserves probabilistic truth degrees:
\begin{lem}
  For every modal formula~$\phi$ and every state~$a$ in a
  probabilistic transition system, $\phi(a)=\ST_x(\phi)(a)$.
\end{lem}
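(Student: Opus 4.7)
The plan is to proceed by structural induction on the modal formula $\phi$, showing that for each construct $\phi(a) = \ST_x(\phi)(a)$ holds uniformly for all states $a$ (equivalently, that the free-variable evaluation $\ST_x(\phi)(a)$ at the singleton valuation $x \mapsto a$ agrees with the modal semantics). The base cases are essentially immediate: for a rational constant $c$, the standard translation fixes $c$, and both sides evaluate to $c$; for an atom $p$, we have $\ST_x(p) = p(x)$, whose first-order semantics at $a$ is by definition $p^\CA(a)$, matching the modal clause.

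The propositional induction steps — $\phi \ominus c$, $\neg \phi$, and $\phi \land \psi$ — follow directly from the stipulation that $\ST_x$ commutes with these constructs, together with the fact that the first-order clauses for $\ominus c$, $\neg$, and $\land$ are defined by literally the same formulas (truncated subtraction, complement, minimum) as their modal counterparts; applying the inductive hypothesis to the immediate subformulas then closes the step.

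The only interesting case is the modality $\Diamond \phi$. Here $\ST_x(\Diamond \phi) = \diabind{x}{y}{\ST_y(\phi)}$, whose semantic clause unfolds to $\intsuc{a}{\ST_y(\phi)(\,\cdot\,)}$, i.e., the integral of the map $a' \mapsto \ST_y(\phi)(a')$ against $\pi_a$. By the inductive hypothesis applied pointwise to $\phi$, this integrand agrees with $a' \mapsto \phi(a')$, and the integral is then exactly $\intsuc{a}{\phi} = (\Diamond \phi)(a)$ by the modal clause. Since all distributions $\pi_a$ are discrete, no measure-theoretic subtlety arises — the integral is simply the convergent sum $\sum_{a' \in A} \phi(a') \cdot \pi_a(a')$ in both readings.

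There is no real obstacle here; the lemma is a sanity check that the standard translation faithfully mirrors the modal semantics construct by construct, and the induction goes through essentially by unfolding definitions. The only point worth flagging is notational: one should be careful to track the fixed evaluation variable $x$ and the fresh bound variable $y$ in the modal case, and to observe that $\ST_y(\phi)(a') = \phi(a')$ by the inductive hypothesis regardless of which variable name is chosen, since $\phi$ is closed in the modal sense.
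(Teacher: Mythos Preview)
Your proof is correct and follows exactly the approach the paper indicates: the paper simply states that ``an easy induction shows that the standard translation preserves probabilistic truth degrees'' without spelling out details, and your structural induction on~$\phi$ is precisely that easy induction. The case analysis you give (constants, atoms, propositional connectives via commutation, and $\Diamond$ via the integral clause plus the inductive hypothesis) is the canonical unfolding of what the paper leaves implicit.
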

\noindent The standard translation thus identifies quantitative
probabilistic modal logic as a fragment of quantitative probabilistic
first-order logic.
\begin{rem}
  As indicated in the introduction, we take the treatment of the
  modality~$\Diamond$ in our first-order syntax from Litak et al.'s
  (two-valued) \emph{coalgebraic predicate logic}~\cite{LitakEA12},
  which in turn is inspired by Chang's \emph{modal predicate
    logic}~\cite{Chang73}; also, the syntactic definition of the above
  standard translation essentially coincides with the translation from
  coalgebraic modal logic into coalgebraic predicate logic. Litak et
  al.~\cite{LitakEA13} point out that coalgebraic predicate logic may
  equivalently be seen as the extension of the purely modal logic to a
  hybrid logic with nominals, satisfaction operators, local binding,
  and the universal modality. The same comment applies to our
  quantitative logic.

  If one were to use our expectation operator $\Diamond$ in the
  context of a two-valued logic, then expectation would just turn into
  probability (we have already observed in
  Remark~\ref{rem:box-examples} that the expected value of a
  $\{0,1\}$-valued function~$f$ is just the probability
  of~$f^{-1}[\{1\}]$), and moreover one would then have to convert
  probabilities into binary truth values, say by using them in
  arithmetic conditions. This is precisely what happens in Halpern's
  two-valued \emph{type-1} or \emph{statistical} probabilistic
  first-order logic, which features weight expressions $w_x(\phi)$
  that denote the probability of a randomly sampled state~$x$ to
  satisfy~$\phi$, and are used in formulas of first-order real
  arithmetic~\cite{Halpern90}. (More precisely, the weight operator
  can more generally be applied to vectors of variables; we leave a
  corresponding extension of our logic to future work.) Semantically,
  type-1 probabilistic first-order logic further differs from our
  above logic in that it is interpreted over structures that have
  crisp predicates and use only a single global probability
  distribution on the state set, instead of one distribution per
  state. In our syntax, a global distribution can be emulated by
  restricting the $\diabind{x}{y}{\phi}$ construct to be applied only
  to a single fixed globally free variable~$x$. Summing up,
  quantitative probabilistic first-order logic can, as suggested
  earlier, be seen as a quantitative variant of type-1 probabilistic
  first-order logic.
\end{rem}
\subsection{Coalgebraic Modelling}\label{sec:coalg}
Universal coalgebra~\cite{Rutten00} serves as a generic framework for
modelling state-based systems, with the system type encapsulated as a
set functor. Although we are only concerned with a concrete system
type, viz.\ probabilistic transition systems, in the present paper, we
do need coalgebraic methods to some degree. In particular, the
requisite background on behavioural
distances~\cite{BreugelWorrell05,BreugelEA08,bbkk:behavioral-metrics-functor-lifting}
is largely based on coalgebraic techniques, and moreover we will need
the final coalgebra at one point in the development. We require only
basic definitions, which we recapitulate here and then instantiate to
the case of probabilistic transition systems.

Recall first that a set functor~$F:\Set\to\Set$ consists of an
assignment of a set~$FX$ to every set~$X$ and a map $Ff:FX\to FY$ to
every map $f:X\to Y$, preserving identities and composition. The core
example of a functor for the present purposes is the
\emph{distribution functor}~$\dfun$, which assigns to a set $X$ the
set $\dfun X$ of discrete probability measures on~$X$, and to a map
$f:X\to Y$ the map $\dfun f:\dfun X\to\dfun Y$ that takes image
measures; explicitly, $\dfun f(\mu)$ is the image measure of~$\mu$
along~$f$, given by $\dfun f(\mu)(A)=\mu(f^{-1}[A])$. Functors can be
combined by taking \emph{products} and \emph{sums}: Given set functors
$F,G:\Set\to\Set$, the set functors $F\times G,F+G:\Set\to\Set$ are
given by $(F\times G)X=FX\times GX$ and $(F+G)X=FX+GX$, respectively,
with the evident action on maps in both cases; here, $+$ denotes
disjoint union as usual. Every set~$C$ induces a \emph{constant
  functor}, also denoted~$C$ and given by $CX=C$ and $Cf=\id_C$ for
every set~$X$ and every map~$f$. Moreover, the \emph{identity
  functor}~$\id$ is given by $\id\, X=X$ and $\id\, f=f$ for all sets~$X$
and all maps~$f$.

An \emph{$F$-coalgebra} $(A,\xi)$ for a set functor~$F$ consists of a
set~$X$ of \emph{states} and a \emph{transition map} $\xi:A\to FA$,
thought of as assigning to each state $a\in A$ a structured collection
$\xi(a)$ of successors.  A $\dfun$-coalgebra $(A,\xi)$, for instance, is just a
Markov chain: its transition map $\xi:A\to\dfun A$ assigns to each
state a distribution over successor states. Similarly, probabilistic
transition systems in the sense defined above are coalgebras $(A,\xi)$
for the set functor $[0,1]^\At\times(\dfun+1)$: If $\xi(a)=(f,\pi)$,
then $f:\At\to[0,1]$ determines the truth values of the propositional
atoms at the state~$a$, and $\pi$ is either a discrete probability
measure determining the successors of~$a$ or a designated value
denoting termination. The variant of probabilistic transition systems
considered by van Breugel and Worrell as discussed in
Remark~\ref{rem:systems}, which indexes probabilistic transition
relations over a set~$\Act$ of actions and moreover uses unrestricted
subdistributions, corresponds to coalgebras $(A,\xi)$ for the set
functor $\dfun(\id+1)^\Act$ -- given a state~$a$ and an action
$c\in\Act$, $\xi(a)(c)\in\dfun(A+1)$ is a subdistribution over
successor states of~$a$, with the summand~$1$ serving to absorb the
weight missing to obtain total weight~$1$.

A \emph{morphism} $f:(A,\xi)\to(B,\zeta)$ between $F$-coalgebras
$(A,\xi)$ and $(B,\zeta)$ is a map $f:A\to B$ such that
\begin{equation*}
  Ff(\xi(a)) = \zeta(f(a)) 
\end{equation*}
for all states $a\in A$. Morphisms should be thought of as
behaviour-preserving maps or functional bisimulations. E.g.\
$f:A\to B$ is a morphism of $\dfun$-coalgebras (i.e.\ Markov chains)
$(A,\xi)$ and $(B,\zeta)$ if for each set $Y\subseteq B$ and each
state $a\in A$,
\begin{equation*}
  \zeta(f(a))(Y)=\xi(a)(f^{-1}[Y]),
\end{equation*}
i.e.\ the probability of reaching $Y$ from $f(a)$ is the same as that
of reaching $f^{-1}[Y]$ from~$a$. Morphisms of probabilistic
transition systems, viewed as coalgebras, satisfy a similar condition
for the successor distributions, and additionally preserve the truth
values of propositional atoms (formal details are given in
Remark~\ref{rem:coalg-morph-def}).

An $F$-coalgebra $(Z,\zeta)$ is \emph{final} if for every
$F$-coalgebra $(A,\xi)$ there exists exactly one morphism
$(A,\xi)\to(Z,\zeta)$. Final coalgebras are unique up to isomorphism
if they exist, and should be thought of as having as states all
possible behaviours of states in $F$-coalgebras. For our present
purposes, we do not need an explicit description of the final
coalgebra; it suffices to know that since the functor describing
probabilistic transition systems is \emph{accessible} (more precisely
$\omega_1$-accessible), a final coalgebra for it, i.e.\ a final
probabilistic transition system, exists~\cite{Barr93}.

\section{Behavioural distances and games\\ for probabilistic transition systems}
\label{sec:games}

\noindent We will now discuss several notions of behavioural distance
for probabilistic transition systems: via fixed point iteration \`a la
Wasserstein/Kantorovich, via games and via modal logics. We will
mostly focus on depth-$n$ distances, defining the unbounded distance
only for one of the cases in order to be able to formulate our main
target result, which characterizes first-order formulas that are
non-expansive w.r.t.\ unbounded-depth behavioural distance. We will
eventually show (Section~\ref{sec:modal-approx}) that at finite depth,
all these distances coincide. It has been shown in previous
work~\cite{dgjp:metrics-labelled-markov,BreugelWorrell05,BreugelEA08}
that the two unbounded-depth distances defined via Kantorovich fixed
point iteration and via the logic, respectively, coincide in very
similar settings; such results can be seen as probabilistic variants
of the Hennessy-Milner theorem.

We start by defining requisite standard notions regarding
pseudo-metric spaces.

\begin{defn}[Pseudometric spaces, non-expansive maps, function spaces]
  \label{def:metric}
  Given a non-empty set $X$, a \emph{(bounded) pseudometric on $X$} is
  a function $d\colon X\times X\to [0,1]$ such that for all
  $x,y,z\in X$, the following axioms hold: $d(x,x) = 0$
  (\emph{reflexivity}), $d(x,y) = d(y,x)$ (\emph{symmetry}),
  $d(x,z) \le d(x,y)+d(y,z)$ (\emph{triangle inequality}). If
  additionally $d(x,y)=0$ implies $x=y$, then $d$ is a
  \emph{metric}. A \emph{(pseudo)metric space} is a pair $(X,d)$ where
  $X$ is a set and $d$ is a (pseudo)metric on $X$. We equip the unit
  interval $[0,1]$ with the standard Euclidean distance~$d_e$,
  $d_e(x,y)=|x-y|$.

  A function $f\colon X\to Y$ between pseudometric spaces $(X,d_1)$,
  $(Y,d_2)$ is \emph{non-expansive} if
  $d_2 \circ (f \times f) \le d_1$, i.e.\ $d_2(f(x),f(y))\le d_1(x,y)$
  for all $x,y$. We then write $$f \colon \nonexp{(X, d_1)}{(Y, d_2)}.$$
  The space of non-expansive functions $\nonexp{(X, d_1)}{(Y, d_2)}$
  is equipped with the \emph{supremum (pseudo)metric} $d_\infty$
  defined by
  \begin{equation*}
    d_\infty(f,g) = \sup_{x\in X} d_2(f(x),g(x))
  \end{equation*}
  In the special case $(Y,d_2) = ([0,1],d_e)$, we will also denote
  $d_\infty(f,g)$ as $\supnorm{f-g}$.
  
  We denote by
  $\ball{d}{\epsilon}{a} = \{x\in X\mid d(a,x) \le \epsilon\}$ the
  \emph{ball} of radius $\epsilon$ around $a$ in $(X,d)$. The space
  $(X,d)$ is \emph{totally bounded} if for every $\epsilon > 0$ there
  exists a finite \emph{$\epsilon$-cover}, i.e.\ finitely many
  elements $a_1,\dots,a_n\in X$ such that
  $X = \bigcup_{i=1}^n \ball{d}{\epsilon}{a_i}$.
\end{defn}
\noindent Recall that a metric space is compact iff it is complete and
totally bounded.
\begin{rem}
  \label{rem:metric-termination}
  Whenever we have a bounded pseudometric $d$ on some set $X$, we can
  construct a (bounded) pseudometric $\bar d$ on the set $X+1$ by
  defining
  \begin{align*}
    \bar d(\ast,\ast) & = 0 \\
    \bar d(\ast,x_1) = \bar d(x_1,\ast) & = 1 \\
    \bar d(x_1,x_2) & = d(x_1,x_2)
  \end{align*}
  for all $x_1,x_2\in X$. We will usually omit the bar in $\bar d$ and
  just use $d$ to denote either pseudometric.
\end{rem}

\noindent Next we introduce the notions of Wasserstein and Kantorovich
lifting, which coincide according to the Kantorovich-Rubinstein
duality. To this end, we first need the notion of a coupling of two
probability distributions, from which the original distributions are
factored out as marginals.

\begin{defn}
  \label{def:coupling}
  Let $\pi_1$ and $\pi_2$ be discrete probability measures on $A$ and
  $B$, respectively. We denote by $\pi_1\otimes\pi_2$ the set of
  \emph{couplings} of~$\pi_1$ and~$\pi_2$, i.e.\ probability measures
  $\mu$ such that $\pi_1$ and $\pi_2$ are \emph{marginals} of $\mu$:
  \begin{itemize}
  \item for all $a\in A$, $\sum_{b\in B}\mu(a,b) = \pi_1(a)$;
  \item for all $b\in B$, $\sum_{a\in A}\mu(a,b) = \pi_2(b)$.
  \end{itemize}
\end{defn}

\begin{defn}[Wasserstein and Kantorovich distances]
  \label{def:wasserstein-kantorovich}
  Let $(X,d)$ be a pseudometric space. We define two pseudometrics on
  the set $\dfun X$, the \emph{Kantorovich distance}~$d^\uparrow$ and
  the \emph{Wasserstein distance}~$d^\downarrow$:
  \begin{gather*}
     d^\uparrow(\pi_1,\pi_2) = \quad
       \bigvee_{\mathclap{f\colon\nonexp{(X,d)}{([0,1],d_e)}}}
       \quad |\textstyle\int f \dif\pi_1 - \textstyle\int f \dif\pi_2| \\
     d^\downarrow(\pi_1,\pi_2) = \quad
       \bigwedge_{\mathclap{\mu\in\pi_1\otimes\pi_2}}
       \quad {\textstyle{\int}} d \dif\mu.
  \end{gather*}
\end{defn}

\noindent The notation $d^\uparrow,d^\downarrow$ is meant as a
mnemonic for the fact that these distances are obtained via suprema
respectively via infima.

\begin{rem}
  Intuitively these distances have the following meaning, when seen
  from the point of view of transportation theory
  \cite{v:optimal-transport}: Assume that the probability
  distributions $\pi_1,\pi_2$ denote a supply respectively demand
  present at the elements of $X$, which are nodes in a network. The
  distance between two nodes $x,y\in X$ is $d(x,y)$. Now, the
  Wasserstein distance $d^\downarrow$ on probability distributions
  looks for the optimal transport plan $\mu\colon X\times X\to [0,1]$
  specifying that $\mu(x,y)$ units are transported from $x$ to
  $y$. Then ${\textstyle{\int}} d \dif\mu$ is the cost of this
  transport plan, where we sum up the units $\mu(x,y)$ being
  transported, multiplied with the distance $d(x,y)$.  We are looking
  for the best transport plan with the minimal cost.

  The Kantorovich distance, on the other hand, considers the same
  scenario from the point of view of a logistics company that is hired
  to perform the transport. The logistics company sets prices at each
  location via a price function $f\colon X\to [0,1]$, buys surplus
  supply at $x$ for the price $f(x)$, and sells required demand at $y$
  for the price $f(y)$. Non-expansiveness of such price functions
  means that $f(x)-f(y) \le d(x,y)$ for all $x,y\in X$. Otherwise the
  logistics firm would not be hired, since it is cheaper to perform
  the transport directly. Now the Kantorovich distance is the maximal
  profit of the logistics company, where the supremum is taken over
  all possible (non-expansive) price functions.
\end{rem}

\noindent The above notions of lifting a distance on $X$ to a distance
on probability distributions over $X$ can be used to give fixed point
equations for behavioural distances on probabilistic transition
systems.

\begin{defn}[Fixed point iteration \`a la Wasserstein/Kantorovich]
  \label{def:fixed-point-iteration}
  Given a probabilistic transition system
  $\CA = (A,(p^\CA)_{p\in\At},\pi^\CA)$, we define the following chain
  of behavioural distances via fixed point iteration \`a la
  Wasserstein:
  \[ d^W_0(a,b) = 0 \quad \text{ and } \quad d^W_{n+1}(a,b) =
    \bigvee_{p\in\At}|p(a)-p(b)| \lor
    (d^W_n)^\downarrow(\pi_a,\pi_b) \] Similarly we obtain the
  following chain of behavioural distances via fixed point iteration
  \`a la Kantorovich:
  \[ d^K_0(a,b) = 0 \quad \text{ and } \quad d^K_{n+1}(a,b) =
  \bigvee_{p\in\At}|p(a)-p(b)| \lor (d^K_n)^\uparrow(\pi_a,\pi_b). \]
  We refer to these distances more succinctly as the \emph{depth-$n$
    Kantorovich} and \emph{Wasserstein distances}, respectively.
\end{defn}

\noindent In both cases, we start with the zero pseudometric, and in
the next iteration lift the pseudometric~$d_n$ from the previous step
via Wasserstein/Kantorovich. This lifted metric is then applied to the
probability distributions $\pi_a,\pi_b$ associated with $a,b$. (Note
that we are following the convention in
Remark~\ref{rem:metric-termination} in connection with the coalgebraic
representation in Section~\ref{sec:logics}, since $\pi_a,\pi_b$ could
be probability distributions that sum up to $0$ and are hence
represented by $*$). In addition we take the maximum with the supremum
over the distances for all predicates $p\in\At$.

\begin{expl}
  \label{expl:wasserstein}
  In order to illustrate the behavioural distance, in particular the
  Wasserstein distance, we consider the example probabilistic
  transition system depicted below, where
  $\epsilon\in[0,\frac{1}{2}]$. For simplicity we assume that there
  are no propositional atoms. States without outgoing transitions are
  terminating.
  \[
    \xymatrix{
      & & x \ar[ld]_{\frac{1}{2}} \ar[rd]^{\frac{1}{2}} & & & & y
      \ar[ld]_{\frac{1}{2}-\epsilon} \ar[rd]^{\frac{1}{2}+\epsilon} & \\
      & x_1 \ar[ld]_{\frac{1}{2}} \ar[rd]^{\frac{1}{2}}
      & & x_2 \ar@(dl,dr)_{1} & & y_1 \ar[ld]_{\frac{1}{2}-\epsilon}
      \ar[rd]^{\frac{1}{2}+\epsilon} & & y_2 \ar@(dl,dr)_{1} \\
      x_3 & & x_4 \ar@(ur,dr)^{1} & & y_3 & & y_4 \ar@(ur,dr)^{1} 
    }
  \]
  In fact, the depth-$3$ Wasserstein distance between $x$ and $y$ is
  $d^W_3(x,y) = \epsilon-\epsilon^2$, which can be explained as
  follows: The depth-$0$ distance is~$0$, while in the depth-$1$
  distance, terminating states have distance~$1$ from transient
  states. Moreover, it is not hard to see that
  $d^W_2(x_1,y_1) = \epsilon$, $d^W_2(x_1,y_2) = \frac{1}{2}$,
  $d^W_2(x_2,y_1) = \frac{1}{2}-\epsilon$, $d^W_2(x_2,y_2) = 0$, since
  these are exactly the probabilities for which they show differing
  behaviour.

  So in order to determine the depth-$3$ Wasserstein distance between
  $x$ and $y$,
  we have to solve the following transport problem, based on
  $d^W_2$. The supply (given by the probability distribution $\pi_x$)
  is written to the left of $x_1,x_2$ and the demand (given by the
  probability distribution $\pi_y$) is written to the right of
  $y_1,y_2$.
  \[
    \xymatrix{
      [\pi_x(x_1) = \frac{1}{2}] & x_1 \ar@{-}[r]^{\epsilon}
      \ar@{-}[rd]_(.2){\frac{1}{2}} & y_1 &
      [\pi_y(y_1) = \frac{1}{2}-\epsilon] \\
      [\pi_x(x_2) = \frac{1}{2}] &  x_2 \ar@{-}[r]_{0}
      \ar@{-}[ru]_(.8){\frac{1}{2}-\epsilon} & y_2 &
      [\pi_y(y_2) = \frac{1}{2}+\epsilon]
    }
  \]
  The best transport plan is to transport $\frac{1}{2}-\epsilon$ from
  $x_1$ to $y_1$ (where the distance is $\epsilon$), $\epsilon$ from
  $x_1$ to $y_2$ (distance $\frac{1}{2}$) and $\frac{1}{2}$ from $x_2$
  to $y_2$ (distance $0$). So, summing up, we have
  $d_3^W(x,y) = (\frac{1}{2}-\epsilon)\cdot \epsilon + \epsilon\cdot
  \frac{1}{2} + \frac{1}{2}\cdot 0 = \epsilon-\epsilon^2$.
\end{expl}

\noindent We now introduce a key tool for our technical development, a
novel bisimulation game inspired by the definition of the Wasserstein
distance.

\begin{defn}[Bisimulation game]
  \label{def:bisimulation-game}
  Given probabilistic transition systems $\CA,\CB$,
  $a_0\in A,b_0\in B$, and $\epsilon_0\in[0,1]$, the
  \emph{$\epsilon_0$-bisimulation game} for $a_0$ and $b_0$ is played
  by the \emph{spoiler} $S$ and the \emph{duplicator} $D$, with rules as
  follows:
  \begin{itemize}
  \item Configurations: triples $(a,b,\epsilon)$ with states $a\in A$, $b\in B$
    and the maximal allowed deviation $\epsilon\in[0,1]$.
  \item Starting configuration: $(a_0,b_0,\epsilon_0)$
  \item Moves: in each round, $D$ first picks a probability measure
    $\mu \in \pi_a\otimes\pi_b$.
    Then, $D$ distributes the deviation $\epsilon$ over all pairs
    $(a',b')$ of successors, i.e.~picks a function $\epsilon'\colon
    A\times B\to[0,1]$ such that $\int\epsilon'\dif\mu \le \epsilon$.
    Finally, $S$ picks a pair $(a',b')$ with $\mu(a',b') > 0$ and the
    new configuration is then $(a',b',\epsilon'(a',b'))$.
  \item $D$ wins if both states are terminating or $\epsilon = 1$.
  \item $S$ wins if exactly one state is terminating and $\epsilon <
    1$.
  \item Winning condition: $|p(a)-p(b)|\le\epsilon$ for all atoms $p$.
  \end{itemize}
  
  The game comes in two variants, the \emph{(unbounded) bisimulation
  game} and the \emph{$n$-round bisimulation game}, where $n\ge 0$.
  $D$ wins if the winning condition holds before every round,
  otherwise $S$ wins. More precisely, $D$ wins the unbounded game if
  she can force infinite play and the $n$-round game once $n$ rounds
  have been played (the winning condition is not checked after the
  last round, so in particular, any $0$-round game is an immediate
  win for $D$).
\end{defn}

\begin{rem}
  The above bisimulation game is designed to fit the Wasserstein
  lifting. It differs from bisimulation games in the literature
  (e.g.~\cite{DesharnaisEA08}) in a number of salient features. A
  particularly striking aspect is that~$D$'s moves are not similar to
  $S$'s moves, and moreover~$D$ in fact moves
  before~$S$. Intuitively,~$D$ is required to commit beforehand to a
  strategy that she will use to respond to~$S$'s next move. Another
  aspect worth noting is that the distance bound~$\epsilon$ changes
  as the game is being played, a complication forced by the arithmetic
  nature of probabilistic transition systems.
\end{rem}

\noindent Based on the game we obtain the following notions of
depth-$n$ and unbounded game distance.

\begin{defn}
  \label{def:game-distance}
  Given a probabilistic transition system
  $\CA = (A,(p^\CA)_{p\in\At},\pi^\CA)$, we define the following
  \emph{depth-$n$ game distance}:
  \[ d^G_n(a,b) = \bigwedge\{ \epsilon\mid D\text{ wins the $n$-round
    bisimulation game on $(a,b,\epsilon)$}\}. \]
  Similarly the (unbounded-depth) \emph{game distance} is defined by
  \[ d^G(a,b) = \bigwedge\{ \epsilon\mid D\text{ wins the 
      bisimulation game on $(a,b,\epsilon)$}\}. \]
\end{defn}

\begin{expl}
  \label{expl:game}
  In order to illustrate the game we revisit the transition system
  from Example~\ref{expl:wasserstein}.  We start with the
  configuration $(x,y,\epsilon-\epsilon^2)$ and sketch a winning
  strategy for~$D$. Recall that~$D$ must in particular find a coupling
  $\mu$ with marginals $\pi_x,\pi_y$. What she can in fact do is use
  the optimal transport plan from Example~\ref{expl:wasserstein}. She
  thus takes $\mu(x_1,y_1) = \frac{1}{2}-\epsilon$,
  $\mu(x_1,y_2) = \epsilon$, $\mu(x_2,y_2) = \frac{1}{2}$, and for all
  other pairs the value of $\mu$ is $0$. She then needs to distribute
  the allowed deviation $\epsilon-\epsilon^2$ over all pairs of
  successors, which she does using the function $\epsilon'$ with
  $\epsilon'(x_1,y_1) = \epsilon$, $\epsilon'(x_1,y_2) = \frac{1}{2}$,
  $\epsilon'(x_2,y_1) = \frac{1}{2}-\epsilon$,
  $\epsilon'(x_2,y_2) = 0$ (this is is exactly the distance $d_2^W$);
  it can then be easily checked that
  $\int\epsilon'\dif\mu \le \epsilon-\epsilon^2$ (see the computation
  in Example~\ref{expl:wasserstein})

  Now $S$ picks one of the pairs $x_i,y_j$ and the game continues in
  the configuration $(x_i,y_j,\epsilon'(x_i,y_j))$. Since
  $\epsilon'(x_i,y_j)$ is the $2$-step behavioural distance between
  $x_i$ and $y_j$, the remaining game is won by $D$.
\end{expl}

\noindent Finally we define the depth-$n$ logical distance via
quantitative probabilistic modal logic as recalled in
Section~\ref{sec:logics}, restricting to formulas of rank at most $n$.

\begin{defn}
  \label{def:logical-distance}
  Given a probabilistic transition system
  $\CA = (A,(p^\CA)_{p\in\At},\pi^\CA)$, we define \emph{depth-$n$
    logical distance}~$d^L_n$ as
    \[ d^L_n(a,b) = \bigvee_{\rk(\phi)\le n} |\phi(a)-\phi(b)|. \]
\end{defn}

\noindent The equivalence of the four bounded-depth behavioural
distances introduced above will be shown in
Theorem~\ref{thm:modal-approx}.

\bigskip

\noindent Behavioural distance forms the yardstick for our notion of
bisimulation invariance; for definiteness:
\begin{defn}
  A quantitative, i.e.\ $[0,1]$-valued, property~$P$ of states in
  probabilistic transition systems, or a formula defining such a
  property, is \emph{bisimulation-invariant} if~$P$ is non-expansive
  w.r.t.\ the game distance, i.e.\ for states $a,b$ in probabilistic
  transition systems $\CA,\CB$, respectively,
  \begin{equation*}
    |P(a)-P(b)|\le d^G(a,b).
  \end{equation*}
  Similarly,~$P$ is \emph{depth-$n$ bisimulation invariant}, or
  \emph{finite-depth bisimulation invariant} if mention of~$n$ is
  omitted, if~$P$ is non-expansive w.r.t.\ $d^G_n$ in the same
  sense. 
\end{defn}
\noindent It is easy to see that quantitative probabilistic modal
formulas are bisimulation-invariant, or as a slogan
\begin{quote}
  \emph{quantitative probabilistic modal logic is
    bisimulation-invariant,}
\end{quote} 
We refrain from giving an explicit proof, as the results of the next
section (Remark~\ref{rem:invariance}) will imply that quantitative
probabilistic modal logic is in fact even finite-depth
bisimulation-invariant (a stronger invariance property since clearly
$d^G_n\le d^G$).

\section{Modal Approximation at Finite Depth}
\label{sec:modal-approx}

We proceed to establish the most important stepping stone on the way
to the eventual van Benthem theorem: We show that every quantitative
property of states in probabilistic transition systems that is
non-expansive w.r.t.\ bounded-depth behavioural distance can be
approximated by modal formulas of bounded rank. We prove this
simultaneously with coincidence of the various finite-depth
behavioural pseudometrics defined in the previous section.

To begin, we prove separately that the pseudometric~$d^G_n$ defined by
our bisimulation game coincides with the Wasserstein
pseudometric~$d^W_n$:

\begin{lem}\label{lem:metrics-equal-gw}
  We have $d^G_n = d^W_n$.
\end{lem}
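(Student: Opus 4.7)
The plan is to prove the equality by induction on $n$, establishing both inequalities together. I would in fact prove the slightly more operational pair of statements that (i) for every $\delta > 0$ and every $\epsilon \ge d^W_n(a,b) + \delta$ the duplicator $D$ has a winning strategy in the $n$-round bisimulation game on $(a,b,\epsilon)$, and (ii) whenever $D$ wins the $n$-round game on $(a,b,\epsilon)$ we have $d^W_n(a,b) \le \epsilon$. Direction (i) gives $d^G_n \le d^W_n$ on letting $\delta \to 0$, and (ii) gives the reverse inequality.

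The base case $n = 0$ is immediate: $d^W_0 \equiv 0$, and $D$ wins the $0$-round game at every $\epsilon \ge 0$ because no winning check is performed.

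For the inductive step of (i), assume (i) and (ii) hold at depth $n$, and let $\epsilon \ge d^W_{n+1}(a,b) + \delta$. The initial check succeeds: $|p(a) - p(b)| \le d^W_{n+1}(a,b) \le \epsilon$ for every atom $p$, and the termination clauses of Remark~\ref{rem:metric-termination} absorb the degenerate cases, since any coupling between a terminating and a transient state yields $(d^W_n)^\downarrow(\pi_a,\pi_b) = 1$, forcing $\epsilon = 1$ and hence an immediate $D$-win. In the transient/transient case, $D$ picks a coupling $\mu \in \pi_a \otimes \pi_b$ with $\int d^W_n \dif\mu \le (d^W_n)^\downarrow(\pi_a,\pi_b) + \delta/2$ and announces $\epsilon'(a',b') = d^W_n(a',b') + \delta/2$. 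Then $\int \epsilon' \dif\mu \le d^W_{n+1}(a,b) + \delta \le \epsilon$, so the move is legal, and for every pair $(a',b')$ chosen by $S$ the inductive hypothesis (i) at depth $n$ with slack $\delta/2$ yields a winning $D$-strategy from $(a',b',\epsilon'(a',b'))$. For the inductive step of (ii), suppose $D$ wins the $(n{+}1)$-round game on $(a,b,\epsilon)$. Surviving the initial check forces $|p(a)-p(b)| \le \epsilon$ for every atom, and the termination rules dispose of the mixed-termination case as before. Otherwise, let $(\mu,\epsilon')$ be $D$'s first move; any $(a',b')$ with $\mu(a',b') > 0$ is a legal spoiler choice, so $D$ also wins the $n$-round game from $(a',b',\epsilon'(a',b'))$, and the inductive hypothesis (ii) gives $d^W_n(a',b') \le \epsilon'(a',b')$ on the support of $\mu$. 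Integrating, $(d^W_n)^\downarrow(\pi_a,\pi_b) \le \int d^W_n \dif\mu \le \int \epsilon' \dif\mu \le \epsilon$, which combined with the atomic bound yields $d^W_{n+1}(a,b) \le \epsilon$.

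The main source of fiddliness I anticipate is that the Wasserstein infimum need not be attained on discrete distributions with countably infinite support; this is what forces the $\delta$-slack in direction (i) and the $\delta/2$ split between choosing the near-optimal coupling and announcing $\epsilon'$, with the infimum over $\delta$ being taken only at the outermost level rather than at any single intermediate configuration of the game tree.
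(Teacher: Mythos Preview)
Your proof is correct and follows essentially the same inductive structure as the paper's: both directions are proved by induction on~$n$, with direction~(ii) matching the paper's ``$\ge$'' verbatim, and direction~(i) matching the paper's ``$\le$'' up to the handling of the slack. The one technical difference is that the paper enumerates the countable support $R=\{(a_1,b_1),(a_2,b_2),\dots\}$ and distributes the slack geometrically via $\epsilon'(a_i,b_i)=d^W_n(a_i,b_i)+2^{-i}\delta$, whereas you use a uniform additive slack $\epsilon'(a',b')=d^W_n(a',b')+\delta/2$. Your version is actually slightly simpler: since $\mu$ is a probability measure, $\int \delta/2 \dif\mu = \delta/2$, so the enumeration and geometric series are unnecessary. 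Both approaches should, strictly speaking, truncate $\epsilon'$ at~$1$ (which only helps, since $D$ wins immediately at $\epsilon=1$); neither makes this explicit. Your phrasing of the mixed-termination case (``forcing $\epsilon=1$'') is slightly off for direction~(i), where the statement is simply vacuous when $d^W_{n+1}(a,b)=1$; but the intended reasoning is clear and correct.
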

\begin{proof}
  Induction over~$n$. The base case $n = 0$ is clear: the $0$-round
  game is an immediate win for~$D$, so $d^G_0 = d^W_0 = 0$. We proceed
  with the inductive step from~$n$ to~$n+1$.
  
  So let~$a$ and~$b$ be states in probabilistic transition
  systems~$\CA,\CB$, respectively. If $a$ and $b$ are both
  terminating, then $d^G_{n+1}(a,b) = d^W_{n+1}(a,b) = 0$. If exactly
  one of $a,b$ is terminating, then
  $d^G_{n+1}(a,b) = d^W_{n+1}(a,b) = 1$. Now assume that both $a$ and
  $b$ are transient.

  ``$\ge$'': Let $d^G_{n+1}(a,b) \le \epsilon$, so $D$ wins the
  $(n+1)$-round bisimulation game on $(a,b,\epsilon)$. We show that
  $d^W_{n+1}(a,b) \le \epsilon$.
  First, for every $p\in\At$, $|p(a)-p(b)| \le \epsilon$ by the
  winning condition.
  Second, let $\mu\in\pi_a\otimes\pi_b,\epsilon'\colon A\times
  B\rightarrow[0,1]$ be $D$'s choices in the first turn.
  By assumption, $D$ wins the $n$-round bisimulation game on
  $(a',b',\epsilon'(a',b'))$ for every $a'\in A, b'\in B$, so
  $d^W_n = d^G_n\le\epsilon'$ by induction, and thus $\int
  d^W_n\dif\mu \le \int \epsilon'\dif\mu \le \epsilon$.

  ``$\le$'': Let $d^W_{n+1}(a,b) < \epsilon$. It suffices to give a
  winning strategy for $D$ in the $(n+1)$-round bisimulation game on
  $(a,b,\epsilon)$ (implying~$d^G_{n+1}(a,b)\le\epsilon$).  The
  winning condition in the initial configuration follows immediately
  from the assumption.  Also by the assumption, there exists
  $\mu\in\pi_a\otimes\pi_b$ such that $\int d^W_n\dif\mu<\epsilon$.
  As $\pi_a$ and $\pi_b$ are discrete, the set
  \[ R := \{(a',b') \mid \pi_a(a') > 0 \text{ and } \pi_b(b') > 0 \}
  \]
  is countable; so we can write $R = \{(a_1,b_1),(a_2,b_2),\dots\}$.
  Now put $\delta = \epsilon - \int d^W_n\dif\mu$ and define
  $\epsilon'(a_i,b_i) = d^W_n(a_i,b_i) + 2^{-i}\delta$ for
  $(a_i,b_i)\in R$ and $\epsilon'(a',b') = 0$ for $(a',b') \notin
  R$. Then
  \[ \textstyle \int \epsilon'\dif\mu \le \int d^W_n\dif\mu + \delta =
     \epsilon, \]
  so playing $\mu$ and $\epsilon'$ constitutes a legal move for $D$.
  Now, since $\mu\in\pi_a\otimes\pi_b$, $\mu(a',b') = 0$ for all
  $(a',b')\notin R$, $S$ must pick some $(a_i,b_i) \in R$. Then
  \[ d^G_n(a_i,b_i) = d^W_n(a_i,b_i) < \epsilon'(a_i,b_i), \]
  so $D$ wins the $n$-round game on $(a_i,b_i,\epsilon'(a_i,b_i))$.
\end{proof}

\noindent The coincidence of the remaining pseudometrics is proved
in one big induction (following a similar structure as Wild et
al.~\cite{WildEA18}), along with total boundedness (needed later in
this Section to apply a variant of the Arzel\`a-Ascoli theorem) and the
mentioned modal approximability of depth-$n$ bisimulation-invariant
properties. We phrase the latter as density of the modal formulas of
rank at most~$n$ in the non-expansive function space
(Definition~\ref{def:metric}):

\begin{thm}\label{thm:modal-approx}
  Let $\CA$ be a probabilistic transition system with state
  set~$A$. Then for all $n\ge 0$,
  \begin{enumerate}
    \item we have $d^G_n = d^W_n = d^K_n = d^L_n =: d_n$ on $\CA$;
      \label{item:metrics-equal}
    \item the pseudometric space $(A,d_n)$ is totally bounded; and
      \label{item:tot-bounded}
    \item $\modf{n}$ is a dense subset of
      $\nonexp{(A,d_n)}{([0,1],d_e)}$.
      \label{item:modal-approx}
  \end{enumerate}
\end{thm}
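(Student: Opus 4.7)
The plan is to prove the three assertions by simultaneous induction on~$n$, since each assertion at level $n+1$ draws on all three at level~$n$. The base case $n=0$ is immediate: rank-$0$ formulas are constants, so $d^L_0 = 0$, and the same holds for the other three pseudometrics by their definitions; the space $(A, d_0)$ is trivially totally bounded, and every non-expansive map out of it is constant, hence approximable by the rational constants in $\modf{0}$.

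For the inductive step, Lemma~\ref{lem:metrics-equal-gw} supplies $d^G_{n+1} = d^W_{n+1}$. To obtain $d^W_{n+1} = d^K_{n+1}$, I would invoke Kantorovich-Rubinstein duality: by IH parts~(1) and~(2) at level~$n$, $d^W_n = d^K_n = d_n$ and $(A, d_n)$ is totally bounded, which is the setting in which $(d_n)^\uparrow = (d_n)^\downarrow$ on discrete probability measures (passing, if necessary, to the compact metric completion of the pseudometric quotient). For $d^L_{n+1} \le d^K_{n+1}$, a structural induction on formulas shows that every formula in $\modf{n+1}$ is non-expansive w.r.t.~$d^K_{n+1}$; the only interesting case, $\Diamond\psi$ with $\psi \in \modf{n}$, follows from non-expansivity of $\psi$ w.r.t.~$d^K_n = d_n$ (a consequence of IH~(1) and the definition of $d^L_n$) combined with the definition of the Kantorovich lifting. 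Conversely, for $d^K_{n+1} \le d^L_{n+1}$ I would proceed by approximation: any Kantorovich test function $f\colon\nonexp{(A, d_n)}{([0,1], d_e)}$ is, by IH~(3), within $\delta$ of some $\phi \in \modf{n}$ in the supremum norm, and then $|\intsuc{a}{f} - \intsuc{b}{f}|$ differs by at most $2\delta$ from $|(\Diamond\phi)(a) - (\Diamond\phi)(b)| \le d^L_{n+1}(a,b)$, so letting $\delta \to 0$ concludes.

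Total boundedness at level $n+1$ then follows from IH~(2) by the standard fact that the Wasserstein lifting of a totally bounded pseudometric is again totally bounded on discrete probability measures, combined with discretization over the (finite) propositional atoms. For the density claim, the plan is to pass to the compact metric completion~$\hat A$ of the pseudometric quotient of $(A, d_{n+1})$; the image of $\modf{n+1}$ in $C(\hat A, [0,1])$ is a sublattice containing all rational constants, and it separates points precisely because part~(1) just established $d^L_{n+1} = d_{n+1}$, so the Kakutani-Stone lattice form of the Stone-Weierstrass theorem yields density in $C(\hat A, [0,1])$ and hence in its subspace $\nonexp{(A, d_{n+1})}{([0,1], d_e)}$. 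The main obstacle I foresee is orchestrating these interdependencies correctly: density at level~$n$ is needed for $d^K_{n+1} \le d^L_{n+1}$, while density at level $n+1$ in turn consumes the coincidence of metrics at level $n+1$ (for point separation) as well as total boundedness at level $n+1$ (for compactness of the completion). Along the way, one must also take some care when invoking Kantorovich-Rubinstein duality and the total boundedness of the Wasserstein lifting in the pseudometric setting with possibly infinitely supported discrete measures.
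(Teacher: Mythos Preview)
Your proposal is correct and matches the paper's overall architecture: simultaneous induction on~$n$, with exactly the inter-level dependencies you spell out (density at level~$n$ feeding into $d^K_{n+1}\le d^L_{n+1}$; the fresh metric coincidence and total boundedness at level $n{+}1$ feeding into density at level~$n{+}1$). The base case and the use of Lemma~\ref{lem:metrics-equal-gw} and Kantorovich--Rubinstein duality are handled identically.

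Two sub-steps are executed differently. For total boundedness, the paper does not cite preservation of total boundedness under the Wasserstein lifting; instead it applies a totally-bounded Arzel\`a--Ascoli theorem to conclude that $\nonexp{(A,d_n)}{([0,1],d_e)}$ is totally bounded, extracts a finite net of modal formulas via IH~(3), and uses these together with the atoms to build an $\tfrac{\epsilon}{4}$-isometry $A\to[0,1]^{k+m}$. Your route is shorter if one grants the Wasserstein fact. For density, the paper stays in the totally bounded pseudometric setting via a custom Stone--Weierstra\ss{} lemma whose hypothesis is closure under $\min,\max$ plus approximation at each \emph{pair} of points; this is morally your Kakutani--Stone step, but be careful: $\modf{n+1}$ is only a sublattice, not a vector sublattice, so ``separates points and contains constants'' does not suffice for the lattice form. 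One genuinely needs the two-point interpolation property, and the paper obtains it by using truncated subtraction~$\ominus$ to shift a separating formula onto prescribed values. You should make that step explicit.
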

\begin{rem}
  As indicated in the introduction, van Breugel and
  Worrell~\cite{BreugelWorrell05} show similar but unbounded-rank
  versions of two of the claims in the above theorem, namely
  coincidence of an unbounded Kantorovich-style distance and an
  unbounded logical distance, and density of the set of \emph{all}
  modal formulas in the space of non-expansive functions w.r.t.\
  \emph{unbounded} behavioural distance, both essentially amounting to
  a quantitative Hennessy-Milner theorem. As discussed in the
  introduction, the bounds on modal rank and bisimulation depth are
  key features in the above theorem.
\end{rem}
\begin{rem}\label{rem:invariance}
  From Theorem~\ref{thm:modal-approx}, it is immediate that as claimed
  at the end of Section~\ref{sec:games}, \emph{quantitative
    probabilistic modal logic is finite-depth bisimulation-invariant}:
  By definition of $d^L_n$, every $\phi\in\modf{n}$ is
  $d^L_n$-invariant, and hence invariant w.r.t.\ all other
  finite-depth behavioural distances.
\end{rem}
\begin{proof}[Proof of Theorem~\ref{thm:modal-approx}] By
  Lemma~\ref{lem:metrics-equal-gw}, for Item~\ref{item:metrics-equal}
  we just have to show that $d^W_n = d^K_n = d^L_n$. We proceed to
  prove all claims simultaneously by induction on $n$.
  
  For the base case $n = 0$, the behavioural distances are all
  trivial: $d^W_0 = 0$ and $d^K_0 = 0$ by definition and $d^L_0 = 0$
  because all formulas of rank~$0$ are (propositional combinations of)
  constants.  In particular, the space $(A,d_0)$ is totally bounded.
  Finally, because every function that is non-expansive wrt.~$d_0$
  must be constant, Item~\ref{item:modal-approx} follows by density of
  $\RatI$ in $[0,1]$ since the modal syntax includes constants
  $c\in\RatI$.
  
  The induction step is distributed across a number of lemmas, stated
  and proved next.
\end{proof}

\noindent For the remainder of this section, we fix a model $\CA$ and
$n > 0$, and assume as the inductive hypothesis that all claims in
Theorem~\ref{thm:modal-approx} hold for all $n' < n$.

For any pseudometric space $(X,d)$, the relation
$x\sim y :\iff d(x,y) = 0$ is an equivalence relation. The quotient
set $X/{\sim}$ is made into a metric space $(X/\sim,d')$, the
\emph{metric quotient} of $(X,d)$, by taking $d'([x],[y]) = d(x,y)$.

We need the following version of the Kantorovich-Rubinstein
duality~\cite[Proposition 11.8.1]{dudley2002}:
\begin{lem}[Kantorovich-Rubinstein duality] \label{lem:kr-duality} Let
  $(X,d)$ be a separable metric space, and let $\mathcal{P}_1(X)$
  denote the space of probability measures
  $\mu\colon\mathcal{B}(X) \to [0,1]$ on the Borel $\sigma$-algebra
  $\mathcal{B}(X)$ such that
  $\textstyle{\int} d(x,\,\cdot\,) \dif\mu < \infty$ for some $x\in
  X$. Then for $\mu_1,\mu_2\in\mathcal{P}_1(X)$,
  \begin{equation*}
    \bigwedge_{\mathclap{\mu\in\mu_1\otimes\mu_2}}
    {\textstyle{\int}} d \dif\mu
    \quad = \quad
    \bigvee_{\mathclap{f\colon\nonexp{(X,d)}{([0,1],d_e)}}}
    \quad
    |\textstyle\int f \dif\mu_1 - \textstyle\int f \dif\mu_2|.
  \end{equation*}
\end{lem}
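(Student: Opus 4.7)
My plan is to prove the two inequalities separately, following the classical proof strategy for Kantorovich-Rubinstein duality via linear programming duality and approximation.

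The easy direction, $\bigvee \le \bigwedge$, is a direct calculation. For any coupling $\mu\in\mu_1\otimes\mu_2$ and any non-expansive $f$, the marginal conditions give
\begin{equation*}
  \int f\dif\mu_1 - \int f\dif\mu_2 = \int (f(x)-f(y))\dif\mu(x,y),
\end{equation*}
and non-expansivity of $f$ yields $|f(x)-f(y)|\le d(x,y)$ pointwise. Bounding the integrand by its absolute value therefore gives $|\int f\dif\mu_1 - \int f\dif\mu_2| \le \int d\dif\mu$, and taking sup over $f$ on the left and inf over $\mu$ on the right closes this direction.

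For the converse, I would first handle the case where $\mu_1,\mu_2$ are finitely supported. Minimizing $\int d\dif\mu$ over couplings with fixed marginals is then a finite linear program; its LP dual maximizes $\sum_x f_1(x)\mu_1(x) - \sum_y f_2(y)\mu_2(y)$ subject to the constraints $f_1(x) - f_2(y) \le d(x,y)$, and strong LP duality (both primal and dual are feasible and bounded) gives equality of the two optima. A standard optimum-preserving reduction then lets me take $f_1 = f_2 =: f$ with $f$ being $1$-Lipschitz in $d$; since the pseudometrics used in the paper satisfy $d\le 1$ (Definition~\ref{def:metric}), translating and truncating $f$ into $[0,1]$ affects neither Lipschitz constant nor the difference $\int f\dif\mu_1 - \int f\dif\mu_2$.

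The main obstacle is lifting this from the discrete case to general separable metric spaces $(X,d)$ and arbitrary $\mu_1,\mu_2\in\mathcal{P}_1(X)$. The strategy is to approximate $\mu_i$ by finitely supported measures $\mu_i^{(k)}$ concentrated on a countable dense subset (using separability), apply the finite-case duality to each pair, and then pass to the limit on both sides. On the coupling side one uses tightness of optimal couplings (Prokhorov) together with lower semicontinuity of $\mu \mapsto \int d\dif\mu$ under weak convergence; on the function side, the $1$-Lipschitz $[0,1]$-valued functions are equicontinuous and uniformly bounded, so a diagonal Arzel\`a-Ascoli argument on the countable dense set extracts a limit $f$ whose integrals against $\mu_1^{(k)},\mu_2^{(k)}$ converge to the desired supremum. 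The delicate step is ensuring that the two limits coincide, which is precisely where the $\mathcal{P}_1(X)$ integrability hypothesis enters to control tails of $\int d\dif\mu$. Since this is entirely classical, one may alternatively just cite Dudley (Proposition 11.8.1) as the paper does.
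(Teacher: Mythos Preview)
The paper does not prove this lemma; it is stated as a known result, with a citation to Dudley~\cite[Proposition 11.8.1]{dudley2002}, and used as a black box. Your sketch is a correct outline of the classical argument (easy direction by the marginal identity plus Lipschitz bound; hard direction by LP duality in the finitely supported case and weak approximation in the general separable case), and you already note at the end that one may simply cite Dudley as the paper does. One small caveat: your reduction of the optimal dual function to a $[0,1]$-valued non-expansive map relies on $d\le 1$, which is how the paper's Definition~\ref{def:metric} sets things up and is the only case in which the lemma is actually applied, but the lemma as stated allows unbounded~$d$ (hence the $\mathcal{P}_1$ hypothesis); for unbounded~$d$ the restriction to $[0,1]$-valued test functions would need an additional argument or a slight reformulation.
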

\noindent (Recall from Definition~\ref{def:coupling} that
$\mu_1\otimes\mu_2$ is the set of couplings of $\mu_1,\mu_2$.) Using
this equality, we obtain coincidence of the Kantorovich and
Wasserstein distances:
\begin{lem}
  We have $d^W_n = d^K_n$ on $\CA$.
\end{lem}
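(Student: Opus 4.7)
The plan is to reduce the claim to the Kantorovich-Rubinstein duality (Lemma~\ref{lem:kr-duality}). By the inductive hypothesis, $d^W_{n-1}=d^K_{n-1}=:d_{n-1}$ on $\CA$ and $(A,d_{n-1})$ is totally bounded. The defining clauses of $d^W_n$ and $d^K_n$ coincide on the atomic summand $\bigvee_{p\in\At}|p(a)-p(b)|$, so it is enough to show $(d_{n-1})^\downarrow(\pi_a,\pi_b)=(d_{n-1})^\uparrow(\pi_a,\pi_b)$ for every pair of states $a,b$, reading the measures in $A+1$ with the extended pseudometric of Remark~\ref{rem:metric-termination}. The terminating sub-cases are immediate: both liftings equal $0$ if $\pi_a=\pi_b=\ast$, and $1$ if exactly one of them is $\ast$. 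Hence I may assume both are genuine discrete probability measures on $A$.

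The easy direction $(d_{n-1})^\uparrow\le(d_{n-1})^\downarrow$ is a one-line estimate: for any $\mu\in\pi_a\otimes\pi_b$ and any non-expansive $f\colon\nonexp{(A,d_{n-1})}{([0,1],d_e)}$,
\begin{equation*}
  \bigl|\textstyle\int f\dif\pi_a-\int f\dif\pi_b\bigr|
   = \bigl|\textstyle\int(f(x)-f(y))\dif\mu(x,y)\bigr|
   \le \textstyle\int d_{n-1}\dif\mu,
\end{equation*}
so taking suprema on the left and infima on the right yields the inequality. For the reverse direction I would pass to the metric quotient $q\colon (A,d_{n-1})\to(\tilde A,\tilde d_{n-1})$. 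Total boundedness descends to $\tilde A$, and a totally bounded metric space is separable, so Lemma~\ref{lem:kr-duality} applies to $(\tilde A,\tilde d_{n-1})$; the pushforwards $\tilde\pi_a=q_*\pi_a,\tilde\pi_b=q_*\pi_b$ lie in $\mathcal{P}_1(\tilde A)$ since $\tilde d_{n-1}$ is bounded.

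It remains to transport the duality on $\tilde A$ back to $A$. Non-expansive maps $(A,d_{n-1})\to[0,1]$ are constant on $\sim$-classes and hence in bijection with non-expansive maps on $\tilde A$ via $f=\tilde f\circ q$, with $\int f\dif\pi=\int\tilde f\dif q_*\pi$; the Kantorovich suprema therefore agree. For the Wasserstein side, any coupling $\tilde\mu\in\tilde\pi_a\otimes\tilde\pi_b$ lifts to a coupling $\mu(x,y)=\tilde\mu([x],[y])\cdot\pi_a(x)/\tilde\pi_a([x])\cdot\pi_b(y)/\tilde\pi_b([y])$ of $\pi_a,\pi_b$ (with the convention $0/0:=0$), and since $d_{n-1}(x,y)$ depends only on $[x],[y]$ we have $\int d_{n-1}\dif\mu=\int\tilde d_{n-1}\dif\tilde\mu$, so the Wasserstein infima also agree. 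Chaining these two equalities with Lemma~\ref{lem:kr-duality} closes the argument. The main point needing care is this descent to the metric quotient — specifically, checking that the Wasserstein infimum is not strictly decreased — but the discreteness of $\pi_a,\pi_b$ makes the explicit lift above routine.
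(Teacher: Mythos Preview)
Your proof is correct and follows essentially the same route as the paper: pass to the metric quotient of $(A,d_{n-1})$ so that Kantorovich--Rubinstein duality (Lemma~\ref{lem:kr-duality}) applies to a genuine separable metric space, then transport back. The only difference is that where the paper invokes the general fact that both the Kantorovich and the Wasserstein lifting preserve isometries (citing~\cite{bbkk:behavioral-metrics-functor-lifting}), you verify this by hand via the explicit bijection of non-expansive maps and the explicit coupling lift; your version is thus more self-contained but otherwise identical in structure.
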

\begin{proof}
  Essentially, we need to transfer Kantorovich-Rubinstein duality to
  the slightly more general case of pseudometrics. Explicitly, let
  $(B,d)$ be the metric quotient of $(A,d_{n-1})$, and let
  $p\colon A \to B$ be the projection map. By construction, $p$ is an
  isometry.  Both the Kantorovich and the Wasserstein lifting preserve
  isometries~\cite{bbkk:behavioral-metrics-functor-lifting}, so for
  all discrete probability measures $\pi_1,\pi_2$ on $A$,
  \begin{align*}
    (d_{n-1})^{\uparrow}(\pi_1,\pi_2) 
    & = d^{\uparrow}((\dfun p)\pi_1,(\dfun p)\pi_2) \\
    & = d^{\downarrow}((\dfun p)\pi_1,(\dfun p)\pi_2) \\
    & = (d_{n-1})^{\downarrow}(\pi_1,\pi_2).
  \end{align*}
  In the second step we have applied Lemma~\ref{lem:kr-duality} to the
  metric space $(B,d)$, noting that every discrete probability measure
  can be defined on the Borel $\sigma$-algebra and every totally
  bounded space is separable.
\end{proof}
\noindent For the purpose of relating the logical distance to the
Kantorovich distance, we note next that our modality~$\Diamond$ is
non-expansive. Explicitly, we extend~$\Diamond$ to act on
non-expansive functions $f\colon\nonexp{(A,d_n)}{([0,1],d_e)}$ by
\begin{align*}
  \Diamond f\colon\nonexp{(A,d_n)&}{([0,1],d_e)}\\
  (\Diamond f)(a)&=\intsuc{a}{f}.
\end{align*}
\begin{lem}\label{lem:diamond-nonexp}
  The map $f \mapsto \Diamond f$ is non-expansive w.r.t.\ the supremum
  metric.
\end{lem}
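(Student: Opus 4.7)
The plan is to prove the inequality $\supnorm{\Diamond f - \Diamond g} \le \supnorm{f - g}$ by a pointwise estimate and then take the supremum over states. Fix $a\in A$; I would split into cases according to whether $a$ is terminating or transient.

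If $a$ is terminating, then $\pi_a$ is identically zero, so $(\Diamond f)(a) = (\Diamond g)(a) = 0$, and the pointwise bound $|(\Diamond f)(a) - (\Diamond g)(a)| \le \supnorm{f-g}$ is trivial. If $a$ is transient, $\pi_a$ is a discrete probability measure on $A$, and by linearity and monotonicity of the integral,
\[
  |(\Diamond f)(a) - (\Diamond g)(a)|
  = \left|\intsuc{a}{(f - g)}\right|
  \le \intsuc{a}{|f - g|}
  \le \supnorm{f - g} \cdot \intsuc{a}{1}
  = \supnorm{f - g},
\]
using the uniform pointwise bound $|f(a')-g(a')| \le \supnorm{f-g}$ for all $a'\in A$ together with $\intsuc{a}{1} = 1$. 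Taking the supremum over $a$ yields the desired inequality.

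I do not anticipate any real obstacle: the argument is a direct application of standard properties of the integral, and the only bookkeeping required is the clean case split on terminating versus transient states, which is forced by the convention of Remark~\ref{rem:metric-termination}. The accompanying type-level assertion that $\Diamond f$ is again non-expansive w.r.t.\ $d_n$ can be read off from the Kantorovich flavour of $d_n$ via Kantorovich-Rubinstein duality as already invoked in the ambient proof of Theorem~\ref{thm:modal-approx}, so no separate argument is needed at this point.
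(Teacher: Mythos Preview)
Your proof is correct and follows essentially the same approach as the paper's: both argue pointwise via linearity and the bound $\intsuc{a}{|f-g|}\le\supnorm{f-g}$, then take the supremum. Your version is in fact a bit more careful, making the case split on terminating versus transient states explicit and keeping the absolute value in the right place, whereas the paper handles both cases implicitly through the single inequality $\intsuc{a}{\epsilon}\le\epsilon$.
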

\begin{proof}
  Let $\supnorm{f-g}\le\epsilon$; we have to show
  $\supnorm{\Diamond f - \Diamond g}\le\epsilon$. So let $a\in
  A$; then
  \[
    |(\Diamond f)(a) - (\Diamond g)(a)|
    = \intsuc{a}{(f-g)}
    \le \intsuc{a}{\epsilon} \le \epsilon,
  \]
  as required.
\end{proof}
\noindent This allows us to discharge the remaining equality of
behavioural distances, regarding the Kantorovich distance and the
logical distance:
\begin{lem}
  We have $d^K_n = d^L_n$ on $\CA$.
\end{lem}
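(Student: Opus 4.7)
The plan is to prove the two inequalities separately. The direction $d^L_n \le d^K_n$ will go by a routine structural induction on formulas, whereas the converse $d^K_n \le d^L_n$ is where the inductive density statement for $\modf{n-1}$ carries the weight. Both arguments rely on the outer inductive hypothesis $d^K_{n-1}=d_{n-1}$ together with Lemma~\ref{lem:diamond-nonexp}.

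For $d^L_n \le d^K_n$, I would induct on the construction of $\phi\in\modf{n}$ to verify that each such formula is non-expansive w.r.t.~$d^K_n$. Atoms are covered by the first disjunct of the fixed-point equation defining $d^K_n$, constants are trivial, and the propositional combinations $\ominus c$, $\neg$, $\land$ are non-expansive because the corresponding operations on $[0,1]$ are. In the modal case $\phi=\Diamond\psi$ with $\psi\in\modf{n-1}$, the outer inductive hypothesis provides $\psi\colon\nonexp{(A,d_{n-1})}{([0,1],d_e)}$, so the definition of the Kantorovich lifting (applied to $d^K_{n-1}=d_{n-1}$) bounds $|\Diamond\psi(a)-\Diamond\psi(b)|$ by $(d^K_{n-1})^\uparrow(\pi_a,\pi_b)$, which in turn is at most $d^K_n(a,b)$ by the fixed-point equation.

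For the converse $d^K_n \le d^L_n$, I would bound the two disjuncts of the fixed-point equation for $d^K_n(a,b)$ separately against $d^L_n(a,b)$. The atom disjunct $|p(a)-p(b)|$ is immediate because $p\in\modf{n}$. For the Kantorovich disjunct $(d_{n-1})^\uparrow(\pi_a,\pi_b)$ the crucial move is to invoke the inductive density of $\modf{n-1}$ in $\nonexp{(A,d_{n-1})}{([0,1],d_e)}$: for an arbitrary non-expansive test function $f$ and any $\epsilon>0$, choose $\phi\in\modf{n-1}$ with $\supnorm{f-\phi}<\epsilon$; since $\Diamond\phi\in\modf{n}$, a standard triangle-inequality calculation yields $|\int f\dif\pi_a - \int f\dif\pi_b| \le 2\epsilon + |\Diamond\phi(a)-\Diamond\phi(b)| \le 2\epsilon + d^L_n(a,b)$. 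Letting $\epsilon\to 0$ and then taking the supremum over $f$ delivers the bound.

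The main technical point — and the reason the density statement had to be carried through the outer induction in the first place — is that the approximant $\phi$ must be drawn from $\modf{n-1}$, so that prefixing $\Diamond$ keeps the resulting formula within rank~$n$; a mere density of $\modf{n}$ would be worthless here. Terminating states are accommodated by the extension convention of Remark~\ref{rem:metric-termination}: the formula $\Diamond 1\in\modf{n}$ already witnesses logical distance~$1$ whenever exactly one of $a,b$ is terminating, matching $d^K_n(a,b)=1$ in that corner case.
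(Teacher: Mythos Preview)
Your proposal is correct and follows essentially the same approach as the paper: both rely on the inductive density of $\modf{n-1}$ in $\nonexp{(A,d_{n-1})}{([0,1],d_e)}$ together with non-expansiveness of $\Diamond$ (Lemma~\ref{lem:diamond-nonexp}) for the $d^K_n\le d^L_n$ direction, and on a structural induction over the Boolean connectives for $d^L_n\le d^K_n$. The paper merely packages the density argument more abstractly---observing that $f\mapsto |(\Diamond f)(a)-(\Diamond f)(b)|$ is continuous and hence preserves density, so the supremum over all non-expansive $f$ equals the supremum over $\modf{n-1}$---where you carry out the explicit $\epsilon$-approximation; the content is identical.
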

\begin{proof}
  Let $a,b\in A$ and consider the map
  \[
    G \colon \nonexp{(F,d_\infty)}{([0,1],d_e)}, \quad
    f \mapsto |(\Diamond f)(a) - (\Diamond f)(b)|,
  \]
  where $F = \nonexp{(A,d_{n-1})}{([0,1],d_e)}$. Then~$G$ is a
  continuous function because all of its constituents are continuous
  (in particular, $\Diamond$ is continuous by
  Lemma~\ref{lem:diamond-nonexp}).
  
  By the induction hypothesis, and because density is preserved by
  continuous maps, $G[\modf{n-1}]$ is a dense subset of $G[F]$.
  Thus,
  \begin{align*}
    d^K_n(a,b)
    & = \bigvee_{p\in\At} |p(a)-p(b)| \lor \bigvee G[F]
    = \bigvee_{p\in\At} |p(a)-p(b)| \lor \bigvee G[\modf{n-1}] \\
    & = \bigvee_{p\in\At} |p(a)-p(b)| \lor
    \bigvee_{\rk\phi\le n-1} |(\Diamond\phi)(a)-(\Diamond\phi)(b)|
    = \bigvee_{\rk\phi\le n} |\phi(a)-\phi(b)| = d^L_n(a,b).
  \end{align*}
  To prove the penultimate step, we first note that ``$\le$'' follows
  immediately. To see ``$\ge$'', we proceed by induction over the
  Boolean combinations of atoms $p\in\At$ and formulas $\Diamond\phi$,
  where $\phi\in\modf{n-1}$, using that for any formulas $\phi,\psi$
  and $c\in\RatI$:

  \begin{align*}
    |(\phi\ominus c)(a)-(\phi\ominus c)(b)| & \le |\phi(a)-\phi(b)| \\
    |(\neg\phi)(a)-(\neg\phi)(b)| & = |\phi(a)-\phi(b)| \\
    |(\phi\land\psi)(a)-(\phi\land\psi)(b)| &
    \le |\phi(a)-\phi(b)| \lor |\psi(a)-\psi(b)|.\qedhere
  \end{align*}
\end{proof}

\noindent We make use of the following two lemmas~\cite{WildEA18},
which are versions of the Arzel\`a-Ascoli theorem and the
Stone-Weierstra\ss{} theorem where function spaces are restricted to
non-expansive functions instead of the more general continuous
functions, but the underlying spaces are only required to be totally
bounded instead of compact:

\begin{lem}[Arzel\`a-Ascoli for totally bounded
  spaces]\label{lem:arzela-ascoli}
  Let $(X,d_1),(Y,d_2)$ be totally bounded pseudometric spaces. Then
  the space $\nonexp{(X,d_1)}{(Y,d_2)}$, equipped with the supremum
  pseudometric, is totally bounded.
\end{lem}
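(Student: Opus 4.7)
The plan is to construct, for any $\epsilon > 0$, a finite $\epsilon$-cover of $\nonexp{(X,d_1)}{(Y,d_2)}$ under the supremum pseudometric. The guiding intuition is classical: a non-expansive (hence $1$-Lipschitz) function on a totally bounded space is essentially determined by its values on any sufficiently fine finite net, up to an error controlled by the mesh of the net. I would set $\delta = \epsilon/4$ and invoke total boundedness to pick finite $\delta$-covers $x_1,\dots,x_n$ of $X$ and $y_1,\dots,y_m$ of $Y$.

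With these covers in hand, I would consider the finite family (of size at most $m^n$) of \emph{discrete templates} $\hat f\colon \{x_1,\dots,x_n\}\to\{y_1,\dots,y_m\}$. For each template $\hat f$, let
\[
  S_{\hat f} = \bigl\{\, g \in \nonexp{(X,d_1)}{(Y,d_2)} \;\bigm|\; d_2(g(x_i),\hat f(x_i)) \le \delta \text{ for all } i \,\bigr\};
\]
whenever $S_{\hat f}$ is non-empty, choose one representative $g_{\hat f} \in S_{\hat f}$. I claim that the finite collection of chosen $g_{\hat f}$ is an $\epsilon$-cover. Indeed, for any non-expansive $f$, I may build a template $\hat f$ by taking $\hat f(x_i)$ to be some $y_j$ with $d_2(f(x_i),y_j)\le\delta$; then $f$ itself lies in $S_{\hat f}$, so $g_{\hat f}$ exists.

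To bound $d_\infty(f, g_{\hat f})$, I take arbitrary $x \in X$, pick $x_i$ with $d_1(x,x_i)\le\delta$, and route the triangle inequality through the three intermediate values $f(x_i)$, $\hat f(x_i)$, and $g_{\hat f}(x_i)$: the four resulting terms are each at most $\delta$ (two by non-expansivity of $f$ and $g_{\hat f}$, and two by the defining property of $S_{\hat f}$ applied to $f$ and to $g_{\hat f}$), summing to $4\delta = \epsilon$. Taking the supremum over $x$ gives $d_\infty(f, g_{\hat f}) \le \epsilon$, as required. I do not foresee a genuine obstacle: this is the standard discretisation argument behind Arzel\`a--Ascoli, where the uniform Lipschitz-$1$ property of non-expansive maps obviates the usual equicontinuity hypothesis, and total boundedness (rather than compactness) of the underlying spaces suffices because no limiting step is involved.
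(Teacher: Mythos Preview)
Your proof is correct: the discretisation argument via finite $\delta$-nets in both $X$ and $Y$, followed by the four-term triangle inequality estimate, is the standard route to this result and goes through without issue. Note, however, that the paper does not actually give its own proof of this lemma; it is quoted from~\cite{WildEA18} and stated without proof here, so there is no in-paper argument to compare against beyond observing that your proof is exactly the expected one.
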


\begin{lem}[Stone-Weierstra\ss{} for totally bounded spaces]
  \label{lem:stone-weierstrass}
  Let $(X,d)$ be a totally bounded pseudometric space, and let $L$ be
  a subset of $F := \nonexp{(X,d)}{([0,1],d_e)}$ such that
  $f_1,f_2 \in L$ implies $\min(f_1,f_2),\max(f_1,f_2) \in L$.
  Then $L$ is dense in $F$ if each $f\in F$ can be approximated at each
  pair of points by functions in~$L$; that is for all $\epsilon>0$ and
  all $x_1,x_2\in X$ there exists $g\in L$ such that
  $\max(|f(x_1)-g(x_1)|,|f(x_2)-g(x_2)|) \le\epsilon$.
\end{lem}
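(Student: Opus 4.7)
The plan is to adapt the classical lattice-theoretic proof of Stone-Weierstra\ss, substituting total boundedness for compactness and exploiting non-expansiveness of the approximands in place of uniform continuity on a compact domain. Fix $f \in F$ and $\epsilon > 0$; the goal is to produce $g \in L$ with $\supnorm{f-g} \le 3\epsilon$, which suffices by arbitrariness of $\epsilon$. Using total boundedness of $(X,d)$, I would begin by choosing a finite $\epsilon$-cover $x_1,\dots,x_n$ of $X$.

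For each pair of indices $i,j$, the two-point approximation hypothesis supplies a function $g_{ij} \in L$ with $|f(x_k) - g_{ij}(x_k)| \le \epsilon$ for $k \in \{i,j\}$. The next step builds a single $g \in L$ by a double lattice combination. For fixed $i$, set $h_i := \min_j g_{ij}$; this lies in $L$ by closure under finite minima. By construction, $h_i(x_i) \ge f(x_i) - \epsilon$, while $h_i(x_j) \le g_{ij}(x_j) \le f(x_j) + \epsilon$ for each $j$. Now set $g := \max_i h_i$, which lies in $L$ by closure under finite maxima. At each cover point $x_k$ we then have $g(x_k) \ge h_k(x_k) \ge f(x_k) - \epsilon$ on the one hand, and $g(x_k) \le \max_i(f(x_k) + \epsilon) = f(x_k) + \epsilon$ on the other, so $|f(x_k) - g(x_k)| \le \epsilon$ for every cover point.

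Finally, to upgrade pointwise control on the cover to uniform control over $X$, I would invoke non-expansiveness of both $f$ and $g$ (note that $g \in L \subseteq F$, and $F$ is closed under finite minima and maxima). For an arbitrary $z \in X$, picking a cover point $x_j$ with $d(z,x_j) \le \epsilon$, the triangle inequality yields
\[ |f(z) - g(z)| \le |f(z)-f(x_j)| + |f(x_j)-g(x_j)| + |g(x_j)-g(z)| \le 3\epsilon. \]
The main conceptual friction compared with the classical Stone-Weierstra\ss{} argument is that the open-cover/finite-subcover step is unavailable in the merely totally bounded (rather than compact) setting; non-expansiveness of the elements of $F$ is precisely what makes the naive discretization via a finite $\epsilon$-cover do the work that compactness would otherwise do, at the cost only of an innocuous threefold blow-up of the error, absorbed by rescaling~$\epsilon$.
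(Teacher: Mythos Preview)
Your argument is correct: the finite $\epsilon$-cover plus the two-point approximation hypothesis yield the double lattice combination $g=\max_i\min_j g_{ij}\in L$ agreeing with $f$ to within $\epsilon$ at every cover point, and non-expansiveness of both $f$ and $g$ (the latter since $L\subseteq F$ and $F$ is closed under pointwise $\min$/$\max$) propagates this to a $3\epsilon$ uniform bound.

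Note that the present paper does not itself prove this lemma; it is quoted from~\cite{WildEA18}. Your write-up is exactly the intended adaptation of the classical lattice-version Stone--Weierstra\ss{} proof, with the finite-subcover step replaced by a finite $\epsilon$-net and the role of uniform continuity on a compactum taken over by non-expansiveness. There is nothing to correct.
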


\noindent We use the Arzel\`a-Ascoli theorem to complete the inductive
step for Item~\ref{item:tot-bounded} in
Theorem~\ref{thm:modal-approx}. This is done in the following lemma --
related lemmas have already appeared in~\cite{WildEA18} (for a fuzzy
powerset functor) and~\cite{km:bisim-games-logics-metric} (for general
functors).

\begin{lem}\label{lem:tot-bounded}
  $(A,d_n)$ is a totally bounded pseudometric space.
\end{lem}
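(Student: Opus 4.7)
The plan is to exhibit an isometric embedding of $(A, d_n)$ into a product of totally bounded spaces, so that total boundedness of $(A, d_n)$ follows automatically. The starting point is the Kantorovich characterisation
\begin{equation*}
  d_n(a,b) \;=\; \bigvee_{p\in\At} |p(a) - p(b)| \;\lor\; d_{n-1}^{\uparrow}(\pi_a, \pi_b)
\end{equation*}
established in the inductive step above; this splits the job into an atomic part and a Kantorovich part that can be handled separately.

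By the inductive hypothesis, $(A, d_{n-1})$ is totally bounded, and then so is the termination extension $(A{+}1, d_{n-1})$ from Remark~\ref{rem:metric-termination}. Applying the totally bounded version of Arzel\`a-Ascoli (Lemma~\ref{lem:arzela-ascoli}) twice, the spaces
\begin{equation*}
  F := \nonexp{(A{+}1, d_{n-1})}{([0,1], d_e)} \qquad\text{and}\qquad G := \nonexp{(F, d_\infty)}{([0,1], d_e)}
\end{equation*}
are both totally bounded under their supremum pseudometrics. I would define an evaluation map $\iota\colon A\to G$ by $\iota(a)(f) = \intsuc{a}{f}$, with $\pi_a$ extended to $A{+}1$ in the usual way for terminating states; the inequality $|\intsuc{a}{(f-g)}|\le\supnorm{f-g}$ witnesses that each $\iota(a)$ is indeed non-expansive on $(F, d_\infty)$, so the map is well-defined.

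Combining $\iota$ with the atom valuation map yields the embedding
\begin{equation*}
  \hat\iota\colon A \to [0,1]^{\At}\times G,\qquad a\mapsto \bigl((p^\CA(a))_{p\in\At},\,\iota(a)\bigr).
\end{equation*}
Under the max-product pseudometric on the target, the second coordinate contributes exactly $\sup_{f\in F}|\intsuc{a}{f} - \intsuc{b}{f}| = d^{\uparrow}_{n-1}(\pi_a,\pi_b)$ by definition of the Kantorovich lifting, so $\hat\iota$ is an isometric embedding of $(A,d_n)$. Under the standard assumption that $\At$ is finite, $[0,1]^{\At}$ is totally bounded; combined with total boundedness of $G$, the target is totally bounded and $(A,d_n)$ inherits total boundedness. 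The main technical subtleties are the treatment of terminating states in $\iota$, which is absorbed into the termination convention of Remark~\ref{rem:metric-termination}, and the check that the second coordinate of $\hat\iota$ realises the Kantorovich distance on the nose rather than merely dominating it; the latter reduces to the observation that the supremum in the Kantorovich lifting ranges over exactly the set $F$ used to define $G$.
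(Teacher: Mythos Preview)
Your argument is correct, and it takes a genuinely different route from the paper's. You construct an \emph{exact} isometric embedding of $(A,d_n)$ into $[0,1]^{\At}\times G$, where $G$ is obtained by applying Arzel\`a-Ascoli \emph{twice}: once to get $F$ totally bounded, and once more to get $G=\nonexp{(F,d_\infty)}{[0,1]}$ totally bounded. The paper instead uses only one application of Arzel\`a-Ascoli (to obtain $F$ totally bounded) and then invokes the density clause of the inductive hypothesis: it picks finitely many modal formulas $\phi_1,\dots,\phi_m\in\modf{n-1}$ forming an $\tfrac{\epsilon}{8}$-cover of $F$, and shows that the map $a\mapsto(p_1(a),\dots,p_k(a),(\Diamond\phi_1)(a),\dots,(\Diamond\phi_m)(a))$ into the finite-dimensional cube $[0,1]^{k+m}$ is an $\tfrac{\epsilon}{4}$-isometry, from which an $\epsilon$-cover of $(A,d_n)$ is pulled back. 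Your approach is cleaner in that it avoids the $\epsilon$-isometry bookkeeping and, notably, does not need the density part of the inductive hypothesis at all (only total boundedness of $(A,d_{n-1})$); the paper's approach, on the other hand, keeps everything finite-dimensional and ties the construction more visibly to the modal syntax. Your handling of termination via $A{+}1$ is fine and, as you note, matches the paper's convention once one checks that non-expansive maps on $(A,d_{n-1})$ and on $(A{+}1,d_{n-1})$ give the same supremum when the measures are supported on~$A$.
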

\begin{proof}[Proof (sketch)]
  By the induction hypothesis, and using
  Lemma~\ref{lem:arzela-ascoli}, we know that $F :=
  \nonexp{(A,d_{n-1})}{([0,1],d_e)}$ is totally bounded.
  
  Let $\epsilon > 0$. As $\modf{n-1}$ is dense in $F$, there exist
  finitely many $\phi_1,\dots,\phi_m\in\modf{n-1}$ such that
  $\bigcup_{i=1}^m \ball{}{\frac{\epsilon}{8}}{\phi_i} = F$. From these
  formulas, together with the propositional atoms $p_1,\dots,p_k$, we
  can construct the map
  \begin{equation*}
    I\colon A \to [0,1]^{k+m}, \quad
    a \mapsto
    (p_1(a),\dots,p_k(a),(\Diamond\phi_1)(a),\dots,(\Diamond\phi_m)(a)).
  \end{equation*}
  It turns out that $I$ is an $\frac{\epsilon}{4}$-isometry, i.e.\
  $|d_n(a,b) - \supnorm{I(a) - I(b)}| \le \tfrac{\epsilon}{4}$ for all
  $a,b\in A$. Thus, by the triangle inequality, we can take preimages
  to turn a finite $\frac{\epsilon}{4}$-cover of $[0,1]^{k+m}$ (a
  compact, hence totally bounded space) into a finite $\epsilon$-cover
  of $(A,d_n)$.
\end{proof}
\noindent This covers the total boundedness claim in
Theorem~\ref{thm:modal-approx}, and subsequently enables us to use the
above version of the Stone-Weierstra\ss{} theorem
(Lemma~\ref{lem:stone-weierstrass}) to prove the density claim:

\begin{lem}
  $\modf{n}$ is a dense subset of $\nonexp{(A,d_n)}{([0,1],d_e)}$.
\end{lem}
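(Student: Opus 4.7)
The plan is to apply Lemma~\ref{lem:stone-weierstrass} (the totally-bounded Stone-Weierstra\ss{} theorem) to $L=\modf{n}$ inside $F=\nonexp{(A,d_n)}{([0,1],d_e)}$. Total boundedness of $(A,d_n)$ is Lemma~\ref{lem:tot-bounded}. The set $\modf{n}$ is visibly closed under pointwise $\min$ and $\max$ via~$\land$ and~$\lor$, and these operations do not increase rank since rank only counts nesting of~$\Diamond$ and atoms. Hence the only substantive task is the \emph{pairwise approximation property}: given $f\in F$, points $a,b\in A$, and $\epsilon>0$, produce $\phi\in\modf{n}$ with $|f(a)-\phi(a)|\le\epsilon$ and $|f(b)-\phi(b)|\le\epsilon$.

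For the pairwise approximation step I would exploit the already-established equality $d_n=d_n^L$ from Item~\ref{item:metrics-equal}. Write $r=f(a)$, $s=f(b)$, and, after swapping $a$ and $b$ if necessary, assume $r\le s$. Non-expansivity of $f$ gives $s-r\le d_n(a,b)$. Fix $\delta\ll\epsilon$ and, using $d_n=d_n^L$, pick $\psi\in\modf{n}$ with $|\psi(a)-\psi(b)|\ge d_n(a,b)-\delta$; replacing $\psi$ by $\neg\psi$ if needed, assume $\psi(a)\le\psi(b)$, so that $\psi(b)-\psi(a)\ge(s-r)-\delta$. Then pick rationals $\alpha',r',s'\in\RatI$ within $\delta$ of $\psi(a)$, $r$, $s$ respectively, with $r'\le s'$, and let $c=s'-r'\in\RatI$. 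The formula
\[
  \phi \;=\; \neg\bigl(\,\neg\bigl((\psi\ominus\alpha')\wedge c\bigr)\ominus r'\bigr)
\]
lies in $\modf{n}$ (propositional combinations do not raise rank) and computes the non-expansive piecewise-linear postcomposition $x\mapsto\min(\max(x-\alpha',0),\,c)+r'$ applied to $\psi$, clamped to~$[0,1]$. Direct evaluation shows $\phi(a)$ and $\phi(b)$ differ from $r$ and $s$ respectively by at most a small multiple of~$\delta$; choosing $\delta$ accordingly yields the required inequalities.

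The main delicate point is the bookkeeping in this last construction: one must verify that the rational approximations of $\psi(a)$, $r$, $s$ combine so as to bound $|\phi(a)-r|$ and $|\phi(b)-s|$ by~$\epsilon$, uniformly in whether $\psi(a)$ lies above or below~$r$ and without interference from the outer clamps at~$0$ and~$1$ (which is why the construction uses truncated subtraction together with negation to simulate a clamped shift). The case $r>s$ is handled symmetrically by working with $\neg\psi$ from the start. Once the pairwise approximation property is secured, Lemma~\ref{lem:stone-weierstrass} directly yields density of $\modf{n}$ in~$F$, completing the inductive step in the proof of Theorem~\ref{thm:modal-approx}.
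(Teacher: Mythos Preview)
Your proposal is correct and follows essentially the same route as the paper: apply the totally-bounded Stone--Weierstra\ss{} lemma, using total boundedness of $(A,d_n)$ and closure of $\modf{n}$ under $\land,\lor$, and discharge the pairwise-approximation hypothesis by first extracting $\psi\in\modf{n}$ with $|\psi(a)-\psi(b)|$ close to $d^L_n(a,b)\ge|f(a)-f(b)|$ and then post-processing $\psi$ with $\ominus$, $\neg$, and $\land c$ to hit the prescribed values at $a$ and~$b$. The paper's proof sketch stops at ``the desired formula $\phi$ can now be constructed from $\psi$ with the help of truncated subtraction~$\ominus$''; you have simply spelled out one concrete such construction (and your bookkeeping checks out, with error bounded by a small constant multiple of~$\delta$).
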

\begin{proof}[Proof (sketch)]
  By Lemma~\ref{lem:tot-bounded}, $(A,d_n)$ is totally bounded; since
  moreover $\modf{n}$ is closed under $\land$ and $\lor$, we can apply
  Lemma~\ref{lem:stone-weierstrass}.
  
  It thus remains to give, for each non-expansive map
  $f\colon\nonexp{(A,d_n)}{([0,1],d_e)}$, states $a,b \in A$ and
  $\epsilon > 0$, a formula $\phi\in\modf{n}$ such that
  $|f(a)-\phi(a)| \le \epsilon$ and $|f(b)-\phi(b)| \le \epsilon$.
  
  To construct such a formula, we note that $|f(a)-f(b)|\le
  d^L_n(a,b)$ (by non-expansiveness), so there exists some
  $\psi\in\modf{n}$ such that $|\psi(a)-\psi(b)|\ge
  |f(a)-f(b)|-\epsilon$. The desired formula $\phi$ can now be
  constructed from $\psi$ with the help of truncated subtraction $\ominus$.
\end{proof}

\noindent This completes the proof of
Theorem~\ref{thm:modal-approx}. Now that we have a way to approximate
depth-$k$ bisimulation-invariant properties by modal formulas of rank
$k$, on any fixed model, we need a way to make such an approximation
uniform across all possible models. Put differently, we need a
probabilistic transition system that realizes all behaviours up to
depth-$n$ bisimilarity. Unlike in the fuzzy setting~\cite{WildEA18}
(where a final system fails to exist for cardinality reasons), we can
use a final coalgebra, i.e.\ a final probabilistic transition system,
for this purpose. We have recalled in Section~\ref{sec:coalg} that
such a final probabilistic transition system, denoted~$\CF$ in the
following, exists. As explained in Section~\ref{sec:coalg},\;\;$\CF$
even realizes all behaviours up to bisimilarity, in a sense that we
will make more precise presently. Recall that~$\CF$ is formally
characterized by admitting a unique morphism from every probabilistic
transition system. We recall the preservation properties of such
morphisms from Section~\ref{sec:coalg}:

\begin{rem}\label{rem:coalg-morph-def}
  If $f\colon\CA\to\CB$ is a morphism of probabilistic transition
  systems (seen as coalgebras), then, by unfolding definitions, we see
  that for all $a\in A$:
  \begin{itemize}
    \item for all $p\in\At$, $p^\CA(a) = p^\CB(f(a))$,
    \item $a$ is terminating $\iff$ $f(a)$ is terminating,
    \item for all $b'\in B$:
      $\pi^\CB(f(a),b') = \sum_{f(a')=b'} \pi^\CA(a,a')$.
  \end{itemize}
\end{rem}
\noindent Using these properties, we see that morphisms preserve
behaviour on-the-nose, that is:
\begin{lem}\label{lem:bounded-morphism-game}
  Let $f\colon\CA\to\CB$ be a coalgebra morphism. Then, for any $a_0\in
  A$, $d^G(a_0,f(a_0)) = 0$.
\end{lem}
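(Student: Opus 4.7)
The plan is to exhibit an explicit winning strategy for the duplicator $D$ in the bisimulation game on $(a_0, f(a_0), 0)$; this is in fact stronger than needed, since we only require that $D$ wins for arbitrarily small $\epsilon$ in order to conclude $d^G(a_0, f(a_0)) = 0$. The strategy will maintain the invariant that every reachable configuration is of the form $(a, f(a), 0)$ for some $a \in A$.

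In a configuration $(a, f(a), 0)$, I would first check the winning condition: by the first bullet of Remark~\ref{rem:coalg-morph-def}, $p^\CA(a) = p^\CB(f(a))$ for every atom $p$, so $|p(a)-p(f(a))| = 0 \le 0$. Next, by the second bullet, $a$ is terminating iff $f(a)$ is terminating, so the case ``exactly one of the two terminates'' (the only way $S$ could immediately win when $\epsilon < 1$) never arises; if both terminate, $D$ wins the round outright, and otherwise both are transient and the round proceeds. For the move, $D$ plays the \emph{graph coupling}
\[ \mu(a', b') = \begin{cases} \pi^\CA(a, a') & \text{if } b' = f(a'), \\ 0 & \text{otherwise,} \end{cases} \]
together with the constant function $\epsilon'(a', b') = 0$. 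That $\mu$ is a coupling of $\pi^\CA_a$ and $\pi^\CB_{f(a)}$ is exactly the content of the third bullet of Remark~\ref{rem:coalg-morph-def}: the $A$-marginal is immediate from the definition, while the $B$-marginal computes $\sum_{a'} \mu(a', b') = \sum_{f(a')=b'} \pi^\CA(a, a') = \pi^\CB(f(a), b')$. Clearly $\int \epsilon' \dif\mu = 0 \le 0$, so this is a legal move.

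Whatever pair $(a', b')$ with $\mu(a', b') > 0$ the spoiler now chooses, the definition of $\mu$ forces $b' = f(a')$, so the new configuration is $(a', f(a'), 0)$ and the invariant is preserved. Iterating this, $D$ forces infinite play (resp.\ survives all $n$ rounds) while always satisfying the winning condition, hence wins the unbounded bisimulation game on $(a_0, f(a_0), 0)$. Therefore $d^G(a_0, f(a_0)) = 0$.

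The only genuine verification is the marginal computation for the graph coupling, which is immediate from the defining property of a coalgebra morphism; everything else (termination alignment, preservation of atom values, freedom to pick $\epsilon' = 0$) is routine. There is no real obstacle here, since coalgebra morphisms are designed to transport behaviour on the nose and the game's Wasserstein-style move structure accommodates this via the graph coupling without any arithmetic slack being consumed.
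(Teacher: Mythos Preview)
Your proof is correct and essentially identical to the paper's: both maintain the invariant that configurations have the form $(a,f(a),0)$, play the graph coupling $\mu(a',b')=\pi_a(a')$ when $b'=f(a')$ together with $\epsilon'=0$, and verify the marginals via Remark~\ref{rem:coalg-morph-def}. Your version is slightly more explicit about the termination alignment, but the argument is the same.
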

\begin{proof}
  We show that $D$ wins the bisimulation game for $(a_0,f(a_0),0)$ by
  maintaining the invariant that the current configuration is of the
  form $(a,b,0)$ with $b = f(a)$. By Remark~\ref{rem:coalg-morph-def},
  this ensures that the winning condition always holds. It remains to
  show that $D$ can maintain the invariant.

  In each round, $D$ begins by picking $\mu(a',b') = \pi_a(a')$ if
  $b'=f(a')$ and $0$ otherwise, and $\epsilon' = 0$.
  We can see that $\mu\in\pi_a\otimes\pi_b$, because, still following
  Remark~\ref{rem:coalg-morph-def},
  \begin{equation*}
    \sum_{b'\in B}\mu(a',b') = \pi_a(a') \quad \text{and} \quad
    \sum_{a'\in A}\mu(a',b') = \sum_{f(a')=b'} \pi_a(a') = \pi_b(b')
  \end{equation*}
  for all $a'\in A$ and $b'\in B$. Also, clearly $\int\epsilon'\dif\mu
  = 0$. Now any choice of $S$ leads to another configuration
  $(a',b',0)$ with $b'=f(a')$.
\end{proof}
\noindent This entails the following lemma, which will enable us to
use approximants on the final probabilistic transition system~$\CF$ as
uniform approximants across all models:
\begin{lem}\label{lem:uniform-approx}
  Let $\phi$ and $\psi$ be bisimulation-invariant first-order
  properties. Then, for any model $\CA$,
  $\supnorm{\phi-\psi}^\CA \le \supnorm{\phi-\psi}^\CF$.
\end{lem}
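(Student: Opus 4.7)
The plan is to use the final probabilistic transition system $\CF$ as a universal target, exploiting the fact (recalled in Section~\ref{sec:coalg}) that every model $\CA$ admits a unique coalgebra morphism $f\colon \CA \to \CF$. The key observation has effectively already been packaged into Lemma~\ref{lem:bounded-morphism-game}: morphisms preserve behaviour on the nose, in the strong sense that $d^G(a, f(a)) = 0$ for every state $a \in A$.

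First, I fix the unique morphism $f\colon \CA \to \CF$ guaranteed by finality. For each $a \in A$, Lemma~\ref{lem:bounded-morphism-game} gives $d^G(a, f(a)) = 0$. Since $\phi$ and $\psi$ are by assumption bisimulation-invariant, i.e.\ non-expansive w.r.t.\ $d^G$ across models, I obtain
\[
  |\phi^\CA(a) - \phi^\CF(f(a))| \le d^G(a, f(a)) = 0
\]
and analogously for $\psi$. Hence $\phi^\CA(a) = \phi^\CF(f(a))$ and $\psi^\CA(a) = \psi^\CF(f(a))$ for every $a \in A$.

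From these two equalities it is immediate that for every $a \in A$,
\[
  |\phi^\CA(a) - \psi^\CA(a)| = |\phi^\CF(f(a)) - \psi^\CF(f(a))| \le \supnorm{\phi - \psi}^\CF,
\]
and taking the supremum over $a \in A$ yields the desired inequality $\supnorm{\phi - \psi}^\CA \le \supnorm{\phi - \psi}^\CF$.

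There is really no serious obstacle here: the argument is a one-line reduction once the right infrastructure is in place. The genuine content has already been done, namely (i) the existence of the final probabilistic transition system $\CF$ (cited from Barr's accessibility result in Section~\ref{sec:coalg}), and (ii) Lemma~\ref{lem:bounded-morphism-game}, which says coalgebra morphisms are zero-distance w.r.t.\ $d^G$. The only mild subtlety worth flagging is that the statement implicitly requires $\phi, \psi$ to be unary in the sense that bisimulation invariance as defined pertains to single states; for multi-variable first-order formulas one would apply the coalgebra morphism componentwise to tuples, which works identically since $f$ is fixed and the game distance is pointwise zero on all components.
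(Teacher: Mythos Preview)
Your proof is correct and essentially identical to the paper's own argument: both take the unique morphism into~$\CF$, invoke Lemma~\ref{lem:bounded-morphism-game} to get zero game distance, use bisimulation invariance to transfer the values of~$\phi$ and~$\psi$ along the morphism, and then take the supremum. The only differences are cosmetic (you write~$f$ where the paper writes~$h$, and your closing remark on multi-variable formulas is extra commentary not present in the paper).
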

\begin{proof}
  Let $\CA$ be a model, and let $h\colon\CA\to\CF$ be the unique
  morphism. Let $a\in A$. Then $d^G(a,h(a)) = 0$ by
  Lemma~\ref{lem:bounded-morphism-game}, and thus
  $\phi_\CA(a) = \phi_\CF(h(a))$ and $\psi_\CA(a) = \psi_\CF(h(a))$ by
  bisimulation invariance. So
  \begin{equation*}
    \supnorm{\phi-\psi}^\CA =
    \sup_{a\in A} |\phi_\CA(a)-\psi_\CA(a)| =
    \sup_{a\in A} |\phi_\CF(h(a))-\psi_\CF(h(a))| \le
    \supnorm{\phi-\psi}^\CF. \qedhere
  \end{equation*}
\end{proof}

\section{Locality}\label{sec:locality}

As indicated in the introduction, the proof of our van Benthem theorem
now proceeds by first establishing that every bisimulation-invariant
first-order formula~$\phi$ is \emph{local} in a sense to be made
precise shortly, and subsequently that~$\phi$ is in fact even
finite-depth bisimulation invariant, for a depth that is exponential
in the rank of~$\phi$. The announced notion of locality makes
reference to a notion of Gaifman graph~\cite{Gaifman82} and distance
that we adapt to the probabilistic setting:

\begin{defn}
  Let $\CA$ be a probabilistic transition system.
  \begin{enumerate}
  \item The \emph{Gaifman graph} of $\CA$ is the undirected graph on
    the set~$A$ of vertices that has an edge for every pair $(a,a')$
    with $\pi(a,a') > 0$.
    \item The \emph{Gaifman distance} $D \colon A\times A \to
      \mathbb{N}\cup\{\infty\}$ is graph distance in the Gaifman
      graph: for every $a,a'\in A$, the distance $D(a,a')$ is the
      least number of edges on a path from $a$ to $a'$, if at least
      one such path exists, and $\infty$ otherwise.
    \item For $a\in A$ and $k \ge 0$, the \emph{radius $k$
      neighbourhood} of $a$ in $\CA$, denoted by $\nbhood{k}{a}$, is
      the subset of $A$ that is reachable in at most $k$ steps:
      $\nbhood{k}{a} = \{a' \in A \mid D(a,a') \le k \}$.
      For $\bar a = (a_1,\dots,a_n)$ we put
      $\nbhood{k}{\bar a} = \bigcup_{i\le n}\nbhood{k}{a_i}$.
  \end{enumerate}
\end{defn}

\noindent Given a state $a$ in a probabilistic transition system~$\CA$
and a radius $k$, we can now restrict~$\CA$ to a smaller set of states
by discarding all states at a distance greater than $k$
from~$a$. States at distance $k$ become terminating. Formally:

\begin{defn}
  Let $\CA$ be a model, $a\in A$ and $k\ge0$. The \emph{restriction}
  of $\CA$ to $\nbhood{k}{a}$ is the model $\CA^k_a$ with set
  $\nbhood{k}{a}$ of states, and
  \begin{align*}
    p^{\CA^k_a}(b) & = p^\CA(b) \\
    \pi^{\CA^k_a}(b,c) & =
    \begin{cases}
      \pi^\CA(b,c), & \text{ if } D(a,b) < k, \\
      0, & \text{ if } D(a,b) = k, \\
    \end{cases}
  \end{align*}
  for all $p\in\At$ and $b,c\in\nbhood{k}{a}$. Note that this does
  actually define a probabilistic transition system, because if
  $D(a,b) < k$, then $D(a,c)\le k$ for all $c$ with
  $\pi^\CA(b,c) > 0$.
\end{defn}

\noindent These restricted models have the expected relationship with
games of bounded depth:

\begin{lem}\label{lem:nbhood-bisim}
  Let~$a$ be a state in a probabilistic transition
  system~$\CA$. Then~$D$ wins the $k$-round $0$-bisimulation game for
  $\CA,a$ and $\CA^k_a,a$.
\end{lem}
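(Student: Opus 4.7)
The plan is to exhibit a ``diagonal'' strategy for the duplicator that maintains the following invariant: after $j \le k$ rounds the current configuration is of the form $(b, b, 0)$ with Gaifman distance $D(a, b) \le j$ in $\CA$. The starting configuration $(a, a, 0)$ gives the base case trivially.

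For the inductive step, I would suppose the current configuration is $(b, b, 0)$ with $D(a, b) \le j$ and the next round is round $j+1 \le k$. Since $j < k$ implies $D(a, b) < k$, the definition of $\CA^k_a$ yields $\pi^\CA(b, \cdot) = \pi^{\CA^k_a}(b, \cdot)$; in particular $b$ has the same termination status in both models, and both copies of $b$ agree on all atoms (being literally the same state), so the winning condition at $\epsilon = 0$ is satisfied. If $b$ is terminating in $\CA$, it is also terminating in $\CA^k_a$ and the duplicator wins outright. Otherwise, she would play the diagonal coupling $\mu$ supported on the pairs $(a', a')$ with weight $\pi^\CA(b, a')$, together with the zero deviation function $\epsilon' \equiv 0$; coincidence of the marginals makes $\mu$ a valid coupling of $\pi^\CA_b$ and $\pi^{\CA^k_a}_b$, and $\int \epsilon' \dif\mu = 0$ validates the move. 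Any response $(c, c)$ by the spoiler then satisfies $\pi^\CA(b, c) > 0$, so $(b, c)$ is an edge in the Gaifman graph of $\CA$ and $D(a, c) \le D(a, b) + 1 \le j + 1$, maintaining the invariant.

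I do not anticipate any genuine obstacle; the argument is essentially a direct unfolding of the definitions, in the spirit of Lemma~\ref{lem:bounded-morphism-game}. The one subtlety worth flagging is the delicate match between the game length $k$ and the neighbourhood radius $k$: the invariant $D(a, b) \le j < k$ keeps us in the region where $\CA^k_a$ faithfully reproduces the transitions of $\CA$, but only strictly before round $k+1$. After the final round we may arrive at a state $c$ with $D(a, c) = k$, at which $\CA^k_a$ terminates but $\CA$ possibly does not; this no longer causes harm because the winning condition is not checked after the last round.
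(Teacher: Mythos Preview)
Your proposal is correct and follows essentially the same approach as the paper: maintain the diagonal invariant $(b,b,0)$ with $D(a,b)\le j$ after $j$ rounds, using the diagonal coupling and zero deviation function as $D$'s move. Your write-up is in fact slightly more explicit than the paper's, in particular in separating out the terminating case and in flagging the boundary subtlety at round~$k$.
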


\begin{proof}
  Player~$D$ wins by maintaining the invariant that whenever $i$
  rounds have been played, the current configuration is of the form
  $(a_i,a_i,0)$ for some $a_i\in A$ with $D(a,a_i)\le i$.  For $i<k$,
  no configuration of this kind can be winning for $S$, because the
  two states in this configuration represent the same state in
  different models (recall that the winning conditions are not checked
  after the last round has been played).

  It remains to give a strategy for $D$ that maintains the invariant.
  It clearly holds at the start of the game, with $a_0 = a$.
  When the $(i+1)$-th round is played, $D$ can pick
  $\mu\in\pi_{a_i}\otimes\pi_{a_i}$ and $\epsilon'\colon
  A\times\nbhood{k}{a} \to [0,1]$ as follows:
  \begin{align*}
    \mu(a',a'') & =
    \begin{cases}
      \pi_{a_i}(a'), & \text{ if } a' = a'', \\
      0, & \text{ otherwise},
    \end{cases} \\
    \epsilon'(a',a'') & = 0.
  \end{align*}
  Clearly, $\int \epsilon'\dif\mu = 0$, so this is a legal move. Now
  the new configuration chosen by $S$ necessarily satisfies the
  invariant.
\end{proof}

\noindent Locality of a formula now means that its truth values only
ever depend on the neighbourhood of the state in question:

\begin{defn}
  A formula $\phi(x)$ is \emph{$k$-local} for some radius $k$, if for
  every model $\CA$ and every state $a\in A$,
  $\phi_\CA(a) = \phi_{\CA^k_a}(a)$.
\end{defn}
\noindent Since modal formulas are bisimulation-invariant,
Lemma~\ref{lem:nbhood-bisim} implies

\begin{lem}
  Every quantitative probabilistic modal formula of rank at most $k$
  is $k$-local.
\end{lem}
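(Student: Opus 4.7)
The plan is to deduce this directly from Lemma~\ref{lem:nbhood-bisim} together with the equality $d^G_k = d^L_k$ established in Theorem~\ref{thm:modal-approx}. Fix a modal formula $\phi$ with $\rk(\phi) \le k$, a model $\CA$, and a state $a \in A$. The goal is to conclude $\phi_\CA(a) = \phi_{\CA^k_a}(a)$ from the fact that the depth-$k$ bisimulation game does not distinguish the two copies of~$a$.

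Since Theorem~\ref{thm:modal-approx} and Definition~\ref{def:logical-distance} are phrased for a single model, the first step is to view $\CA$ and $\CA^k_a$ as living inside a common probabilistic transition system. I would take their disjoint union $\CC$ (relabelling state sets to be disjoint if necessary), equipped with the evident propositional valuations and transition map inherited componentwise from $\CA$ and $\CA^k_a$. Because no transition in $\CC$ crosses between components, an easy induction on formula structure shows that the truth value of any modal formula at a state $c$ of $\CC$ agrees with its truth value at~$c$ in the component containing~$c$; similarly, a bisimulation game played on a pair of states one from each component is identical to the cross-model game of Definition~\ref{def:bisimulation-game} played on $\CA,\CA^k_a$.

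Writing $a_\CA$ and $a_{\CA^k_a}$ for the two copies of~$a$ in~$\CC$, Lemma~\ref{lem:nbhood-bisim} then yields $d^G_k(a_\CA,a_{\CA^k_a}) = 0$ in~$\CC$. Applying Theorem~\ref{thm:modal-approx} to~$\CC$ gives $d^L_k(a_\CA,a_{\CA^k_a}) = 0$, and unfolding Definition~\ref{def:logical-distance} shows $\phi(a_\CA) = \phi(a_{\CA^k_a})$ for every modal formula of rank at most~$k$. Translating back through the first observation of the previous paragraph delivers $\phi_\CA(a) = \phi_{\CA^k_a}(a)$, as required.

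The only mildly delicate point is the passage through a common model; a direct structural induction on~$\phi$ is in principle possible but forces one to show that $(\CA^k_a)^{k-1}_{a'} = \CA^{k-1}_{a'}$ for successors~$a'$ of~$a$, which is fiddly because the undirected Gaifman graph of a restricted model can drop edges present in that of~$\CA$. Routing through the game-theoretic characterisation sidesteps this entirely.
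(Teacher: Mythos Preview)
Your proof is correct and follows essentially the same route as the paper: combine Lemma~\ref{lem:nbhood-bisim} with the finite-depth bisimulation invariance of modal formulas obtained from Theorem~\ref{thm:modal-approx} (cf.\ Remark~\ref{rem:invariance}). The paper dispatches the lemma in a single sentence and glosses over the single-model formulation of Theorem~\ref{thm:modal-approx} that you handle explicitly via the disjoint union; your extra care there is warranted but does not amount to a different approach.
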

\noindent To prove locality of bisimulation-invariant first-order
formulas, we require a model-theoretic tool, an adaptation of
Ehrenfeucht-Fra\"iss\'e equivalence to the probabilistic setting:

\begin{defn}
  \label{def:ef-game}
  Let $\CA,\CB$ be probabilistic transition systems, and
  let~$\bar a_0$ and~$\bar b_0$ be vectors of equal length over $A$
  and $B$, respectively. The \emph{Ehrenfeucht-Fra\"iss\'e game for
    $\CA,\bar a_0$ and $\CB,\bar b_0$}, played by $S$ (\emph{spoiler})
  and $D$ (\emph{duplicator}), is given as follows.
  \begin{itemize}
    \item{Configurations:} pairs $(\bar a,\bar b)$ of vectors $\bar a$
      over $A$ and $\bar b$ over $B$.
    \item \emph{Initial configuration:} $(\bar a_0,\bar b_0)$.
    \item \emph{Moves:}
      Each round can be played in one of two ways, chosen by $S$:
      \begin{itemize}
      \item Standard round: $S$ may select a state in one model, say
        $a\in A$, and $D$ then has to select a state in the other
        model, say $b\in B$, reaching the configuration
        $(\bar aa,\bar bb)$.
      \item Probabilistic round: $S$ may select an index $i$ and a
        fuzzy subset in one of the models, say
        $\phi_A\colon A\to [0,1]$. $D$ then needs to select a fuzzy
        subset in the other model, say $\phi_B\colon B\to [0,1]$, such
        that $\intsuc{a_i}{\phi_A} = \intsuc{b_i}{\phi_B}$.  Then, $S$
        selects an element on one of the sides, say $a\in A$, such
        that $\pi(a_i,a)>0$, and $D$ then selects an element on the
        other side, say $b\in B$, such that $\phi_A(a) = \phi_B(b)$
        and $\pi(b_i,b)>0$, reaching the configuration
        $(\bar aa,\bar bb)$.
      \end{itemize} 
    \item Winning conditions: Any player who cannot move loses. $S$ wins
      if a configuration is reached (including the initial configuration)
      that fails to be a partial isomorphism. Here, a
      configuration $(\bar a,\bar b)$ is a \emph{partial
      isomorphism} if
      \begin{itemize}
        \item $a_i=a_j\iff b_i=b_j$
        \item $p(a_i) = p(b_i)$ for all $i$ and all $p\in\At$
        \item $\pi_{a_i}(a_j) = \pi_{b_i}(b_j)$ for all $i,j$.
      \end{itemize}
      $D$ wins if she reaches the $n$-th round (maintaining configurations
      that are not winning for $S$).
  \end{itemize}
\end{defn}
\noindent As indicated in the related work section, our probabilistic
version of the game is partly modelled on games for topological
first-order logic~\cite{MakowskyZiegler80}, the main difference being
that in probabilistic rounds, we let the players select fuzzy instead
of crisp subsets. (Similarly, in Desharnais et al.'s probabilistic
bisimulation games~\cite{DesharnaisEA08}, the probabilistic rounds
involve crisp subsets.) For our purposes, we need only soundness of
Ehrenfeucht-Fra\"iss\'e equivalence:
\begin{lem}[Ehrenfeucht-Fra\"iss\'e invariance]
  \label{lem:ef-inv-fol}
  Let $\CA,\CB$ be probabilistic transition systems, and let
  $\bar a_0,\bar b_0$ be vectors of length $m$ over~$A$ and $B$,
  respectively.  Suppose that $D$ wins the $n$-round
  Ehrenfeucht-Fra\"iss\'e game on $\bar a_0,\bar b_0$.  Then, for
  every probabilistic first-order formula $\phi$ with at most $m$ free variables
  $x_1,\dots,x_m$ and $\qr(\phi)\le n$,
  \begin{equation*}
    \phi(\bar a_0) = \phi(\bar b_0).
  \end{equation*}  
\end{lem}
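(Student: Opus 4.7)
I would prove the lemma by structural induction on $\phi$, carrying the full statement (quantified over all vectors $\bar a_0, \bar b_0$ and all $n$ satisfying the hypotheses) as the induction hypothesis. The atomic and propositional cases are handled uniformly from the partial isomorphism condition: since $D$ wins, the initial configuration $(\bar a_0, \bar b_0)$ must be a partial isomorphism, which immediately yields equality for the atomic formulas $c$, $x_i = x_j$, and $p(x_i)$. The connectives $\ominus c$, $\neg$, and $\land$ preserve rank and let the induction hypothesis pass through to their immediate subformulas on the same configuration, since any $D$-winning strategy for $\phi$ is also $D$-winning for the subformulas.

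For the existential $\phi = \exists x_0.\,\psi(x_0,\bar x)$, where $\qr(\psi) \le n-1$, I would exploit the standard rounds. Fix $a \in A$ and imagine $S$ opening with a standard move picking $a$ on the $A$-side; $D$'s winning response gives some $b \in B$ such that $D$ still wins the remaining $(n-1)$-round game on $(\bar a_0 a, \bar b_0 b)$. The induction hypothesis then yields $\psi(a, \bar a_0) = \psi(b, \bar b_0) \le \bigvee_{b' \in B} \psi(b', \bar b_0)$; taking the supremum over $a$ and invoking symmetry gives equality of the two existentials.

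The main obstacle is the modal case $\phi = \diabind{x_i}{y}{\psi(y,\bar x)}$, and this is where the probabilistic rounds of the game earn their keep. I would have $S$ open with a probabilistic round on index $i$ and the fuzzy subset $\phi_A(a) := \psi(a, \bar a_0)$ on the $A$-side; $D$ is then forced to produce some $\phi_B\colon B \to [0,1]$ with $\intsuc{a_i}{\phi_A} = \intsuc{b_i}{\phi_B}$. The key claim is that in fact $\phi_B(b) = \psi(b, \bar b_0)$ for every $b$ in the support of $\pi_{b_i}$. To establish this, let $S$ next select such a $b$ on the $B$-side; then $D$ must answer with some $a \in A$ satisfying $\pi(a_i, a) > 0$ and $\phi_A(a) = \phi_B(b)$, and still win the remaining $(n-1)$-round game on $(\bar a_0 a, \bar b_0 b)$. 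Applying the induction hypothesis to $\psi$ yields $\psi(a, \bar a_0) = \psi(b, \bar b_0)$, whence $\phi_B(b) = \phi_A(a) = \psi(a, \bar a_0) = \psi(b, \bar b_0)$. Integrating against $\pi_{b_i}$ then gives $\intsuc{a_i}{\psi(\cdot, \bar a_0)} = \intsuc{b_i}{\phi_B} = \intsuc{b_i}{\psi(\cdot, \bar b_0)}$, as required. The delicate point, and the reason this works, is precisely the design of the probabilistic round: requiring $D$ to match both the expected value globally and the $\phi_A$-values at every $S$-selectable successor is what forces $\phi_B$ to coincide pointwise with $\psi(\cdot, \bar b_0)$ on the relevant support, converting what is a priori an averaged statement into a pointwise one that the induction hypothesis can actually discharge.
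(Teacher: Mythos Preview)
Your proposal is correct and follows essentially the same structural induction as the paper's proof, with the same use of standard rounds for $\exists$ and probabilistic rounds for $\Diamond$. Your existential case is in fact slightly cleaner than the paper's: you observe directly that the pointwise inequality $\psi(a,\bar a_0)\le\bigvee_{b'}\psi(b',\bar b_0)$ passes to the supremum, whereas the paper routes through an unnecessary $\delta$-approximation argument.
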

\begin{proof}
  We proceed by induction over formulas.
  \begin{itemize}
    \item The cases $p(x_i)$ and $x_i=x_j$ (with $p\in\At$) follow
      immediately from the fact that the initial configuration is a
      partial isomorphism.
    \item The Boolean cases ($c, \phi\ominus c, \neg\phi,
      \phi\land\psi$) follow directly by the induction hypothesis.
    \item $\exists x.\,\phi$:
      Let $(\bar a,\bar b)$ be the current configuration.
      Let $\delta>0$, let $a$ be such that
      \begin{equation*}
        (\exists x.\,\phi)(\bar a) - \phi(\bar a a)< \delta,
      \end{equation*}
      and let $b$ be the winning answer for $D$ in reply to $S$
      choosing $a$. By induction, $\phi(\bar a a) = \phi(\bar b b)$,
      so
      \begin{equation*}
        (\exists x.\,\phi)(\bar b)\ge\phi(\bar b b) = \phi(\bar a a)
        >(\exists x.\,\phi)(\bar a)-\delta.
      \end{equation*}
      Because $\delta>0$ was arbitrary, it follows that 
      $(\exists x.\,\phi)(\bar b)\ge(\exists x.\,\phi)(\bar a)$.
      We can symmetrically show that
      $(\exists x.\,\phi)(\bar a)\ge(\exists x.\,\phi)(\bar b)$, which
      proves this case.
    \item $\diabind{x_i}{y_{m+1}}{\phi}$: Let $(\bar a,\bar b)$ be the
      current configuration.  Suppose that $S$ picks the index $i$ and
      the fuzzy subset
      \begin{equation*}
        \phi_A\colon A\to [0,1], \quad a \mapsto \phi_\CA(\bar a a)
      \end{equation*}
      and $D$'s winning reply is $\psi_B\colon B\to [0,1]$. We show
      that on the support of $\pi_{b_i}$, $\psi_B$ must be equal to
      \begin{equation*}
        \phi_B\colon B\to [0,1], \quad b \mapsto \phi_\CB(\bar b b).
      \end{equation*}
      Suppose there exists some $b\in B$ with $\pi(b_i,b)>0$ and
      $\phi_B(b) \neq \psi_B(b)$. Then $D$ has a winning reply
      $a\in A$ in case $S$ picks this $b$, which means, by the rules
      of the game, that $\pi(a_i,a)>0$ and $\phi_A(a) =
      \psi_B(b)$.
      However, it is also true that $\phi_A(a) = \phi_B(b)$, by the
      induction hypothesis.  This is a contradiction.

      Now, because $\psi_B$ was a winning reply, we obtain
      \begin{equation*}
        (\diabind{x_i}{x_{m+1}}{\phi})(\bar a) =
        \intsuc{a_i}{\phi_A} =
        \intsuc{b_i}{\psi_B} =
        \intsuc{b_i}{\phi_B} =
        (\diabind{x_i}{x_{m+1}}{\phi})(\bar b). \qedhere
      \end{equation*}
  \end{itemize}
\end{proof}
\noindent Since embeddings into disjoint unions of models are
morphisms, the following is immediate from
Lemma~\ref{lem:bounded-morphism-game}:
\begin{lem}\label{lem:bisim-inv-disjoint}
  Every bisimulation-invariant formula is also invariant under
  disjoint union.
\end{lem}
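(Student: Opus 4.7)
The plan is to exploit the fact that the canonical inclusion of a model into a disjoint union is a coalgebra morphism, and then appeal directly to Lemma~\ref{lem:bounded-morphism-game}. Concretely, given models $\CA$ and $\CB$, I form the disjoint union $\CA \sqcup \CB$, whose state set is $A + B$, whose atom valuations are given by the respective valuations on each summand, and whose transition map is $\pi^{\CA \sqcup \CB}(x,y) = \pi^\CA(x,y)$ if $x,y\in A$, analogously for $B$, and $0$ for states lying in different summands. This yields a bona fide probabilistic transition system since each $\pi^{\CA\sqcup\CB}_x$ is either the zero measure or a discrete probability distribution on $A+B$ (with support contained in the summand of $x$).

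Next I verify that the inclusion $\iota \colon \CA \to \CA \sqcup \CB$, $a\mapsto a$, is a coalgebra morphism. Using the criteria recapitulated in Remark~\ref{rem:coalg-morph-def}, this is a routine unfolding of definitions: atom values are preserved by construction, termination status is preserved since successor weights out of $\iota(a)$ within $\CA \sqcup \CB$ live entirely in $A$ and sum to the same value as in $\CA$, and for any $b' \in A+B$ we have $\pi^{\CA\sqcup\CB}(\iota(a), b') = \pi^\CA(a, b')$ when $b'\in A$ and $0$ otherwise, matching $\sum_{\iota(a')=b'}\pi^\CA(a,a')$ in both cases.

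With $\iota$ identified as a morphism, Lemma~\ref{lem:bounded-morphism-game} gives $d^G(a, \iota(a)) = 0$ for every $a \in A$. By the bisimulation-invariance hypothesis on $\phi$ (non-expansiveness with respect to $d^G$), this forces $|\phi^\CA(a) - \phi^{\CA\sqcup\CB}(\iota(a))| \le d^G(a,\iota(a)) = 0$, i.e.\ $\phi^\CA(a) = \phi^{\CA\sqcup\CB}(\iota(a))$, which is precisely invariance under disjoint union. For formulas with several free variables the same argument applies pointwise to each component of the tuple, using that $\iota$ sends any tuple $\bar a$ over $A$ to a tuple $\iota(\bar a)$ in $\CA\sqcup\CB$ whose components have game distance zero to the corresponding components of $\bar a$.

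There is no real obstacle here — the only thing to be checked carefully is that the inclusion is a coalgebra morphism, which is immediate from the construction of the disjoint union. This is why the paper labels the result as immediate.
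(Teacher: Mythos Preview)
Your proof is correct and follows exactly the approach the paper indicates: embeddings into disjoint unions are coalgebra morphisms, so Lemma~\ref{lem:bounded-morphism-game} gives game distance zero, and non-expansiveness then yields invariance. The only quibble is that your closing remark about multi-variable formulas is not needed (and not quite covered by the paper's definition of bisimulation invariance, which is stated only for properties of single states); the lemma is only ever applied to formulas with one free variable.
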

\noindent We have now assembled the necessary ingredients to prove our
desired locality result:
\begin{lem}[Locality]\label{lem:bisim-inv-local}
  Let $\phi$ be a bisimulation invariant first-order formula of quantifier rank
  $n$ with one free variable. Then $\phi$ is $k$-local for $k = 3^n$.
\end{lem}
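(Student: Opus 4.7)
The plan is to prove locality via an Ehrenfeucht-Fraïssé game argument with shrinking radius, combined with the standard disjoint-union trick that invariance under bisimulation affords. Concretely, for a given $(\CA, a)$, I would pass to an inflated model $\CB$ built from disjoint copies of $\CA$ and $\CA^k_a$ (with $k = 3^n$), arranged so that every local isomorphism type around $a$ is realised sufficiently often. Writing $a_A$ and $a_k$ for the distinguished copies of $a$ in the two principal components, Lemma~\ref{lem:bisim-inv-disjoint} gives $\phi_\CA(a) = \phi_\CB(a_A)$ and $\phi_{\CA^k_a}(a) = \phi_\CB(a_k)$, so by Lemma~\ref{lem:ef-inv-fol} it suffices to exhibit a winning strategy for~$D$ in the $n$-round Ehrenfeucht-Fraïssé game between $(\CB, a_A)$ and $(\CB, a_k)$.

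I would have~$D$ maintain, after $i$ rounds with pebble tuples $\bar p$ on the left and $\bar q$ on the right, a partial isomorphism $f_i \colon \nbhood{r_i}{\bar p} \to \nbhood{r_i}{\bar q}$ with $r_i = 3^{n-i}$ and $f_i(p_j) = q_j$. The initial invariant at $i = 0$ holds because the components of~$\CB$ containing $a_A$ and $a_k$ are isomorphic within Gaifman distance strictly less than $k$, the $k$-restriction only modifying transitions at distance exactly $k$ from~$a$. In a \emph{standard round} in which $S$ plays $p'$, the ratio $3$ in $r_i / r_{i+1}$ produces the expected dichotomy. If $p'$ lies within $2 \cdot 3^{n-i-1}$ of some $p_j$, then $p' \in \nbhood{r_i}{\bar p}$, $D$ answers with $q' := f_i(p')$, and the triangle inequality $3^{n-i-1} + 2 \cdot 3^{n-i-1} = 3^{n-i}$ keeps $\nbhood{r_{i+1}}{(\bar p, p')}$ inside the domain of $f_i$. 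Otherwise, $D$ picks a fresh $q'$ in $\CB$ whose $r_{i+1}$-neighborhood is isomorphic to $\nbhood{r_{i+1}}{p'}$ and which lies outside $\nbhood{2 \cdot 3^{n-i-1}}{\bar q}$, using the inflated structure of $\CB$ to guarantee its existence. In a \emph{probabilistic round} at pebble $p_j$, all successors of $p_j$ lie in $\nbhood{1}{p_j} \subseteq \nbhood{r_i}{\bar p}$ and are therefore covered by $f_i$; $D$ defines her response $\psi_B$ by transporting $\phi_A$ along $f_i$ on these successors and extending by~$0$, whereupon the equality $\intsuc{p_j}{\phi_A} = \intsuc{q_j}{\psi_B}$ is automatic because $f_i$ preserves the transition probabilities, and any concrete successor $a$ subsequently chosen by $S$ is mirrored by $f_i(a)$, shrinking the radius exactly as in the near subcase of a standard round.

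The main obstacle is the far-away subcase of the standard round: producing a genuinely fresh $q'$ whose $r_{i+1}$-neighborhood has the right local type while avoiding the $2 r_{i+1}$-neighborhoods of all previously placed $q_j$. This is the point at which the Gaifman--Hanf counting ingredient enters: one has to choose the disjoint-union construction of $\CB$ so that, for each $i \le n$, the model contains at least $i + 1$ disjoint realisations of every local type of radius $r_{i+1} = 3^{n-i-1}$ that can arise around a state reachable from one of the two principal roots within $i$ quantifier jumps. Once this counting bound is secured --- a routine adjustment to the disjoint-union construction, allowed by bisimulation invariance --- the strategy goes through by induction on the round number, and the radii successively collapse to $r_n = 1$, which is enough to validate the partial-isomorphism conditions from Definition~\ref{def:ef-game} throughout the play.
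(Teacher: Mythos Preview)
Your approach is correct and essentially matches the paper's proof: both use Otto's locality-via-EF-game argument with the shrinking-radius invariant $r_i = 3^{n-i}$, and both handle the probabilistic rounds by transporting the fuzzy subset along the local isomorphism (which works because all successors of a pebbled element lie at Gaifman distance~$1$ and hence inside the current neighbourhood). Two minor differences are worth noting.

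First, the paper plays the game between \emph{two} models $\CB$ and $\CC$, each consisting of the respective principal component ($\CA$ for~$\CB$, $\CA^k_a$ for~$\CC$) together with $n$ spare disjoint copies of~$\CA$ and $n$ spare disjoint copies of~$\CA^k_a$; you instead play between two distinguished points in a single inflated model. This is purely cosmetic.

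Second, your far-away case is more elaborate than necessary. You invoke a Gaifman--Hanf-style counting of local types, but since the inflated model is assembled solely from copies of~$\CA$ and~$\CA^k_a$, every local neighbourhood that can arise is already realised in one of these two models. The paper therefore dispenses with local types entirely: when~$S$ plays a far-away~$b$, $D$ simply observes which kind of component~$b$ lies in and replies with the corresponding state in a fresh (so far untouched) copy of that kind on the other side; having~$n$ spare copies of each guarantees a fresh copy is always available over~$n$ rounds. Your phrase ``reachable from one of the two principal roots within~$i$ quantifier jumps'' is in any case imprecise, since standard rounds allow~$S$ to play \emph{any} state of the model.
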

\begin{proof}
  Let $a$ be a state in a probabilistic transition system~$\CA$. We
  need to show $\phi_\CA(a) = \phi_{\CA^k_a}(a)$.  Let $\CB$ be a new
  model that extends $\CA$ by adding $n$ disjoint copies of both $\CA$
  and $\CA^k_a$. Let $\CC$ be the model that extends $\CA^k_a$
  likewise.  We finish the proof by showing that
  \begin{equation*}
    \phi_\CA(a) = \phi_\CB(a) = \phi_\CC(a) = \phi_{\CA^k_a}(a).
  \end{equation*}
  The first and third equality follow by bisimulation invariance of
  $\phi$ (Lemma~\ref{lem:bisim-inv-disjoint}).
  The second equality follows by Ehrenfeucht-Fra\"iss\'e invariance
  (Lemma~\ref{lem:ef-inv-fol}) once we show that $D$ has a winning
  strategy in the $n$-round Ehrenfeucht-Fra\"iss\'e game for $\CB,a$
  and $\CC,a$.
  
  Such a winning strategy can be described as follows: $D$ maintains
  the invariant that, if the configuration reached after $i$ rounds is
  $(\bar b,\bar c)$, then there exists an isomorphism $f_i$ between
  $\nbhood{k_i}{\bar b}$ and $\nbhood{k_i}{\bar c}$ that maps each
  $b_j$ to the corresponding $c_j$, where $k_i = 3^{n-i}$.
  
  The invariant holds at the start of the game, because the
  neighbourhoods on both sides are just $\nbhood{k}{a}$.
  Similarly, whenever the invariant holds, the current configuration
  is a partial isomorphism by restriction of the given isomorphism to
  the two vectors of the configuration.
  
  Now we consider what happens during the rounds. Suppose that $i$
  rounds have been played, and the current configuration is $(\bar
  b,\bar c)$. If $S$ decides to play a standard round, playing some
  $b\in B$, then there are two cases:
  \begin{itemize}
    \item $b\in \nbhood{2k_{i+1}}{\bar b}$:
      In this case, the radius-$k_{i+1}$ neighbourhood
      $\nbhood{k_{i+1}}{b}$ of $b$ is fully contained in the domain
      $\nbhood{k_i}{\bar b}$ of $f_i$ -- this follows by the triangle
      inequality, as $2k_{i+1} + k_{i+1} = 3k_{i+1} = k_i$. Now $D$
      can just reply with $c := f_i(b)$, and an isomorphism $f_{i+1}$
      between $\nbhood{k_{i+1}}{\bar bb}$ and $\nbhood{k_{i+1}}{\bar
      cc}$ is formed by restricting the domain and codomain of $f_i$
      appropriately.
    \item $b\notin \nbhood{2k_{i+1}}{\bar b}$:
      In this case, the radius-$k_{i+1}$ neighbourhoods
      $\nbhood{k_{i+1}}{b}$ of $b$ and $\nbhood{k_{i+1}}{\bar b}$ of
      $\bar b$ do not intersect -- this too follows from the triangle
      inequality.
      Now $D$ can pick a fresh copy of $\CA$ or $\CA^k_a$ in $\CC$
      (depending on which kind of copy $b$ lies in); her reply $c$ is
      then just $b$ in that copy. Here, a fresh copy is one that was
      never visited on any of the previous rounds. By construction of
      $\CB$ and $\CC$, such a copy is always available.
      This means that we now have two isomorphisms, one between
      $\nbhood{k_{i+1}}{\bar b}$ and $\nbhood{k_{i+1}}{\bar c}$ (by
      restriction of $f_i$), and one between
      $\nbhood{k_{i+1}}{b}$ and $\nbhood{k_{i+1}}{c}$ (by isomorphism
      of the respective copies of $\CA$ or $\CA^k_a$). Because these
      isomorphisms have disjoint domains and codomains, we can combine
      them to form the desired isomorphism $f_{i+1}$.
  \end{itemize}
  If $S$ plays a standard round with some $c\in C$ instead, the
  same argument applies.

  Finally, if $S$ starts a probabilistic round by picking an index
  $0\le j\le i$ and playing some $\phi_B\colon B\to [0,1]$, then we
  first note that, by the rules of the game, the support of $\phi_B$
  must be contained in $\nbhood{1}{\bar b}$, which in turn must be
  contained in the domain of $f_i$. This means that $D$ can construct
  $\phi_C\colon C\to [0,1]$ by mapping along $f_i$, i.e.~$\phi_C(c) =
  \phi_B(f_i^{-1}(c))$ for all successors $c$ of $c_j$, and
  $\phi_C(c) = 0$ otherwise. Now, whichever $b$ or $c$ is picked by
  $S$, $D$ can just reply with $c:=f_i(b)$ or $b:=f_i^{-1}(c)$ and
  $f_{i+1}$ is formed as in the first case of a standard round.
  Again, the same argument applies if $S$ picks a fuzzy subset
  $\phi_C$ on the other side.
\end{proof}

\section{A Probabilistic van Benthem Theorem}
\label{sec:main}
\noindent Having established locality of bisimulation-invariant
first-order formulas and modal approximability of finite-depth
bisimulation-invariant properties, we now discharge the last remaining
steps in our programme: We show by means of an unravelling
construction that bisimulation-invariant first-order formulas are
already finite-depth bisimulation-invariant, and then conclude our
main result, the probabilistic van Benthem theorem. 
\begin{defn}
  Let $\CA$ be a probabilistic transition system. The
  \emph{unravelling} $\CA^\ast$ of $\CA$ is a probabilistic transition
  system with non-empty finite sequences $\bar a\in A^+$ as states,
  where atoms and transition probabilities are defined as follows:
  \begin{gather*}
    p^{\CA^\ast}(\bar a) = p^\CA(\last(\bar a)) \\
    \pi^{\CA^\ast}(\bar a,\bar aa) = \pi^\CA(\last(\bar a),a),
  \end{gather*}
  for any $\bar a \in A^+$ and $a\in A$, where $\last(\bar a)$ is the
  last element of $\bar a$.
\end{defn}

\noindent As usual, models are bisimilar to their unravellings:

\begin{lem}\label{lem:bisim-unravel}
  For any probabilistic transition system $\CA$ and $a\in A$, $D$ has
  a winning strategy in the $0$-bisimulation game for $\CA,a$ and $\CA^\ast,a$.
\end{lem}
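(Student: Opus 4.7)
The plan is to identify a canonical coalgebra morphism from the unravelling back to the original system, and then derive the claim immediately from Lemma~\ref{lem:bounded-morphism-game}, using the evident symmetry of the bisimulation game. Concretely, consider the map $\last \colon \CA^\ast \to \CA$ that sends each non-empty sequence $\bar a \in A^+$ to its final entry. I would first check that this is a morphism of probabilistic transition systems in the sense of Remark~\ref{rem:coalg-morph-def}.

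Verifying the required conditions is routine. Atom values agree by the very definition $p^{\CA^\ast}(\bar a) = p^\CA(\last(\bar a))$. For the transition condition, fix $\bar a \in A^+$ and $a' \in A$; among all $\bar b \in A^+$ with $\pi^{\CA^\ast}(\bar a, \bar b) > 0$, the only one satisfying $\last(\bar b) = a'$ is $\bar b = \bar a a'$, and $\pi^{\CA^\ast}(\bar a, \bar a a') = \pi^\CA(\last(\bar a), a')$ by construction, yielding
\begin{equation*}
\pi^\CA(\last(\bar a), a') = \sum_{\last(\bar b) = a'} \pi^{\CA^\ast}(\bar a, \bar b).
\end{equation*}
Termination is preserved as an immediate consequence of the same identity, since both successor distributions have the same total mass.

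With $\last$ identified as a coalgebra morphism, Lemma~\ref{lem:bounded-morphism-game} yields $d^G(\bar a, \last(\bar a)) = 0$ for every $\bar a \in A^+$; specialising to the length-one sequence $\bar a = (a)$ gives $d^G((a), a) = 0$ when the game is played with $\CA^\ast$ on the left and $\CA$ on the right. The bisimulation game rules are symmetric with respect to the two models (swapping the roles of the two sides throughout a play transforms any strategy for $D$ into one on the mirror-image game), so $D$ equally wins starting from $\CA, a$ versus $\CA^\ast, (a)$, which is the configuration claimed in the lemma. As the result reduces cleanly to Lemma~\ref{lem:bounded-morphism-game}, no real difficulty arises; the only substantive point to handle with care is the summation condition for $\pi^{\CA^\ast}$, which holds trivially because each non-terminating state $\bar a$ of $\CA^\ast$ has its successors in bijection with the successors of $\last(\bar a)$ in $\CA$, with the same weights. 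An equally direct alternative would be to describe the strategy explicitly, having $D$ maintain the invariant that the current configuration is of the form $(a, \bar a, 0)$ with $a = \last(\bar a)$, and in each round play the coupling $\mu(a', \bar a a') = \pi_a(a')$ together with $\epsilon' = 0$; this is precisely the instantiation of the proof of Lemma~\ref{lem:bounded-morphism-game} at the morphism $\last$.
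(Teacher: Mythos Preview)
Your proposal is correct. The paper takes the direct route you describe at the end as an alternative: it simply exhibits the explicit strategy maintaining the invariant that the current configuration has the form $(\last(\bar a),\bar a,0)$, with $D$ playing the diagonal coupling $\mu(\bar a a',a')=\pi_{\last(\bar a)}(a')$ and $\epsilon'=0$. Your main argument instead factors through Lemma~\ref{lem:bounded-morphism-game} by observing that $\last\colon\CA^\ast\to\CA$ is a coalgebra morphism, which is a tidy way to avoid repeating that lemma's proof; the only additional ingredient you need is the (evident) symmetry of the bisimulation game in its two arguments, which you handle correctly. Both approaches are equally short, and as you yourself note, the explicit strategy is exactly what Lemma~\ref{lem:bounded-morphism-game} produces when instantiated at $\last$.
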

\begin{proof}
  $D$ wins by maintaining the invariant that the configuration of the
  game is of the form $(\bar a,\last(\bar a),0)$ for some $\bar a\in A^+$.
  To do so, she can put $\mu(\bar aa,a) = \pi_{\bar a}(\bar aa) =
  \pi_{\last(\bar a)}(a)$ for all $a\in A^+$, all other values of
  $\mu$ are $0$, and $\epsilon' = 0$. Then any move by $S$ leads to a
  configuration where the invariant holds.
\end{proof}
\noindent We next show that locality and bisimulation invariance imply
finite-depth bisimulation invariance:
\begin{lem}\label{lem:local-k-bisim-inv}
  Let $\phi$ be bisimulation invariant and $k$-local.
  Then $\phi$ is depth-$k$ bisimulation invariant.
\end{lem}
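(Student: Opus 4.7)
The plan is to derive depth-$k$ bisimulation invariance from the ostensibly weaker combination of full bisimulation invariance and $k$-locality by routing the reasoning through the unravelling construction: once we restrict the unravelling to the $k$-neighbourhood of the root, the tree structure forces any run of the bisimulation game to descend one level per round, so the a priori unbounded game becomes effectively bounded. Fix $a\in A$ and $b\in B$; the goal is $|\phi_\CA(a)-\phi_\CB(b)|\le d^G_k(a,b)$.

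First, I lift $a$ and $b$ to the single-element sequences $(a),(b)$ in the unravellings $\CA^*$ and $\CB^*$. Since $\bar a\mapsto\last(\bar a)$ is a coalgebra morphism, Lemma~\ref{lem:bisim-unravel} together with bisimulation invariance of $\phi$ yields $\phi_\CA(a)=\phi_{\CA^*}((a))$, and likewise on the $\CB$-side. Applying $k$-locality to the unravellings then replaces $\CA^*$ by the restriction $(\CA^*)^k_{(a)}$ without changing the value of $\phi$ at the root, and analogously for $\CB$. A final application of bisimulation invariance inside the restricted unravellings bounds
\begin{equation*}
  |\phi_\CA(a)-\phi_\CB(b)| \le d^G((a),(b)),
\end{equation*}
where the right-hand side is the unbounded game distance computed in $(\CA^*)^k_{(a)}$ and $(\CB^*)^k_{(b)}$.

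The central observation is that these restricted unravellings are finite-depth trees. Because the unravelling has only forward edges, each round of the bisimulation game starting at the root descends the tree by exactly one level; hence after $k$ rounds both sides occupy length-$(k+1)$ sequences, which sit at Gaifman distance exactly $k$ from the respective root and are therefore terminating in the restriction. Consequently the unbounded game on $(\CA^*)^k_{(a)},(\CB^*)^k_{(b)}$ starting from $((a),(b),\epsilon)$ agrees with its $k$-round version.

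It remains to bound this $k$-round game distance on the restricted unravellings by $d^G_k(a,b)$ on the originals, which I do by transferring any winning strategy for $D$ in the original $k$-round game from $(a,b,\epsilon)$ to the restricted-unravelling game. The transfer uses the canonical probability-preserving bijection between the successors of a state $a_n\in A$ and the successors of the depth-$n$ tree node $(a_0,\ldots,a_n)\in(\CA^*)^k_{(a)}$, valid for $n<k$: couplings and deviation functions lift verbatim, and since $p^{\CA^*}(\bar a)=p^\CA(\last(\bar a))$ the atomic winning condition also transports along the bijection. I expect the main technical work to lie in the tree-truncation observation of the previous paragraph and in a careful audit that the transferred strategy meets the winning condition at the final depth-$k$ configuration, where the game naturally closes out with both sides in terminating leaves.
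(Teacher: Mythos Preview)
Your proposal is correct and follows essentially the same architecture as the paper: pass to the unravelling, restrict to the $k$-neighbourhood of the root, use the tree shape to collapse the unbounded game to the $k$-round game, and conclude via bisimulation invariance and $k$-locality.

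The one point where you diverge is in bounding the $k$-round game distance on the restricted unravellings by $d^G_k(a,b)$ on the original systems. You propose to transfer $D$'s winning strategy directly along the bijection between successors of $a_n$ in~$\CA$ and successors of $(a_0,\dots,a_n)$ in $(\CA^*)^k_{(a)}$. This works, but the paper obtains the same bound more cheaply via the triangle inequality for the pseudometric $d^G_k$: using Lemma~\ref{lem:bisim-unravel} and Lemma~\ref{lem:nbhood-bisim} one has $d^G_k(a'',a)=0$ and $d^G_k(b,b'')=0$, whence $d^G_k(a'',b'')\le d^G_k(a,b)$ immediately. This avoids the explicit strategy-transfer bookkeeping you anticipate. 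Also, your closing remark about auditing the winning condition ``at the final depth-$k$ configuration'' is unnecessary for the strategy transfer itself (the $k$-round game does not check the condition after the last round); the depth-$k$ configuration matters only for the separate tree-truncation observation, where the point is simply that both states are terminating so $D$ wins the unbounded game outright.
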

\begin{proof}
  Let $\CA$ and $\CB$ be two probabilistic transition systems and
  let $a\in A$ and $b\in B$ be two states such that
  $d_k^G(a,b)<\epsilon$. It is enough to show that
  $|\phi_\CA(a)-\phi_\CB(b)|\le\epsilon$.
  
  We denote by $a'$ and $a''$ the copies of $a$ in $\CA^\ast$ and
  $(\CA^\ast)^k_a$, respectively. Similarly, $b'$ and $b''$ denote the
  copies of $b$ in $\CB^\ast$ and $(\CB^\ast)^k_b$.
  By Lemma~\ref{lem:bisim-unravel}, $D$ wins the
  $0$-bisimulation-game for $\CA,a$ and $\CA^\ast,a'$ (similarly for
  $\CB$) and by Lemma~\ref{lem:nbhood-bisim}, she also wins the
  $k$-round $0$-bisimulation game for $\CA^\ast,a'$ and
  $(\CA^\ast)^k_a,a''$ (similarly for $\CB$).
  Because behavioural distance $d^G_k$ is a pseudometric, this
  means that
  \[
    d^G_k(a'',b'')\le
    d^G_k(a'',a')+d^G_k(a',a)+d^G_k(a,b)+d^G_k(b,b')+d^G_k(b',b'')
    =d^G_k(a,b)<\epsilon,
  \]
  so $D$ has a winning strategy in the $k$-round
  $\epsilon$-bisimulation game for $(\CA^\ast)^k_a,a''$ and
  $(\CB^\ast)^k_b,b''$.
  
  In both $(\CA^\ast)^k_a,a''$ and $(\CB^\ast)^k_b,b''$, the reachable
  states form a tree of depth at most $k$. This implies that, after
  $i$ rounds of the game, the two states on either side of the current
  configuration are nodes at distance $i$ from the root of their
  respective tree. Thus, whenever $k$ rounds have been played in the
  game, $S$ does not have a legal move in the next round, because at
  that point, both nodes in the configuration are necessarily leaves
  and thus terminating.
  This in turn means that if $D$ can win the $k$-round game, she
  also wins the unbounded game, so, by bisimulation invariance of
  $\phi$, $|\phi_{(\CA^\ast)^k_a}(a'')-\phi_{(\CB^\ast)^k_b}(b'')|\le\epsilon$.

  By locality and bisimulation invariance of $\phi$, and again
  Lemma~\ref{lem:bisim-unravel}, we have
  $\phi_{(\CA^\ast)^k_a}(a'') = \phi_{\CA^\ast}(a') =
  \phi_\CA(a)$
  as well as
  $\phi_{(\CB^\ast)^k_b}(b'') = \phi_{\CB^\ast}(b') =
  \phi_\CB(b)$.
  Thus $|\phi_\CA(a)-\phi_\CB(b)|\le\epsilon$, as claimed.
\end{proof}
\noindent Our main result is then stated as follows:
\begin{thm}[Probabilistic van Benthem theorem]
  Every bisimulation-invariant formula of probabilistic first order
  logic with rank at most $n$ can be approximated (uniformly across
  all models) by probabilistic modal formulas of rank at most $3^n$.
\end{thm}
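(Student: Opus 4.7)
The plan is to chain together the major results that have been developed in Sections~\ref{sec:modal-approx}, \ref{sec:locality} and the early part of Section~\ref{sec:main}. Concretely, suppose $\phi$ is a bisimulation-invariant probabilistic first-order formula with one free variable and $\qr(\phi)\le n$. I first apply the locality result (Lemma~\ref{lem:bisim-inv-local}) to conclude that $\phi$ is $k$-local for $k=3^n$. Combining this with Lemma~\ref{lem:local-k-bisim-inv}, I upgrade bisimulation-invariance to depth-$k$ bisimulation-invariance of $\phi$; that is, viewed as a map on states, $\phi$ is non-expansive w.r.t.\ $d^G_k$.

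Next, I restrict attention to the final probabilistic transition system~$\CF$, where I can invoke Theorem~\ref{thm:modal-approx} with $n$ replaced by $k$: the depth-$k$ game distance $d^G_k$ coincides on $\CF$ with the depth-$k$ logical distance, and the formula set $\modf{k}$ lies dense in $\nonexp{(F,d^G_k)}{([0,1],d_e)}$. Since $\phi$ restricted to $\CF$ is non-expansive w.r.t.\ $d^G_k$, this density statement yields, for every $\epsilon>0$, a modal formula $\psi\in\modf{k}$ with $\supnorm{\phi-\psi}^\CF\le\epsilon$.

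The final step is to transfer this single-model approximation to a uniform approximation across all probabilistic transition systems. For this I use Lemma~\ref{lem:uniform-approx}: both $\phi$ (by hypothesis) and $\psi$ (since every formula in $\modf{k}$ is bisimulation-invariant by Remark~\ref{rem:invariance}) are bisimulation-invariant, so for any model $\CA$,
\begin{equation*}
  \supnorm{\phi-\psi}^\CA \le \supnorm{\phi-\psi}^\CF \le \epsilon.
\end{equation*}
Since $\epsilon$ was arbitrary, this exhibits $\phi$ as a uniform limit of formulas in $\modf{k}=\modf{3^n}$, as required.

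The interesting part of this proof is essentially bookkeeping: nearly all of the real mathematical work has been done upstream. The only point requiring mild care is ensuring that the rank bound $3^n$ is threaded correctly through the two halves of the argument (the locality radius from Lemma~\ref{lem:bisim-inv-local} must match the modal rank used when invoking Theorem~\ref{thm:modal-approx}), and that when quoting Lemma~\ref{lem:uniform-approx} the approximating modal formula is indeed a first-order property to which the lemma applies, which is immediate via the standard translation~$\ST$.
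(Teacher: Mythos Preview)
Your proposal is correct and follows essentially the same route as the paper's own proof: apply Lemma~\ref{lem:bisim-inv-local} and Lemma~\ref{lem:local-k-bisim-inv} to obtain depth-$3^n$ bisimulation invariance, invoke Theorem~\ref{thm:modal-approx} on the final coalgebra~$\CF$ to get a rank-$3^n$ modal approximant, and then use Lemma~\ref{lem:uniform-approx} to make the approximation uniform. Your explicit verification that the modal approximant~$\psi$ satisfies the hypotheses of Lemma~\ref{lem:uniform-approx} (bisimulation-invariant via Remark~\ref{rem:invariance}, first-order via~$\ST$) is a nice bit of care that the paper's proof leaves implicit.
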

\begin{proof}
  Let $\phi$ be a probabilistic first-order formula of rank $n$. By
  Lemma~\ref{lem:bisim-inv-local} and
  Lemma~\ref{lem:local-k-bisim-inv}, $\phi$ is
  depth-$k$ bisimulation-invariant for $k = 3^n$.
  By Theorem~\ref{thm:modal-approx}, for every $\epsilon>0$, there
  exists a probabilistic modal formula $\psi_\epsilon$ of rank at most $k$
  such that $\supnorm{\phi-\psi_\epsilon}\le\epsilon$ on the final
  coalgebra $\mathcal{F}$. By Lemma~\ref{lem:uniform-approx}, this
  approximation is uniform. 
\end{proof}

\begin{rem}\label{rem:rosen}
  Although it is easy to adapt the unravelling construction to
  preserve finite models by using partial unravelling up to the
  locality depth, this will still not yield a Rosen version of the
  above theorem, i.e.\ one where the semantics is restricted to finite
  models. The reason is that the proof as given above involves the
  final probabilistic transition system, which is infinite. We thus
  leave the proof (or refutation) of such a finite-model version of
  the theorem as an open problem.
\end{rem}

\begin{rem}\label{rem:subdistributions}
  As mentioned in Section~\ref{sec:logics}, a version of the
  characterization theorem for unrestricted subdistributions (i.e.
  where the possible models are coalgebras for the functor
  $[0,1]^\At\times\dfun(1+\id)$) can be recovered with some technical
  adaptations. This mostly concerns the Wasserstein-based distance as
  well as the bisimulation game, as the notion of couplings
  (Definition~\ref{def:coupling}) needs to be changed. A coupling of
  two subdistributions $\pi_1$ on $A$ and $\pi_2$ on $B$ is a
  probability distribution on $(1+A)\times(1+B)$.
  The Wasserstein distance of $\pi_1$ and $\pi_2$ for some
  pseudometric $d$ is then defined as $d^\downarrow(\pi_a,\pi_b) =
  \bigwedge_{\mu\in\pi_a\otimes\pi_b}\int\bar d\dif\mu$, using the
  construction $\bar d$ from Remark~\ref{rem:metric-termination}. 
  As for the changes in the game, $D$ now needs to pick a coupling of
  subdistributions as just defined and when distributing the deviation
  $\epsilon$ over the successor pairs, she needs to pick
  $\epsilon'\colon(1+A)\times(1+B)\to[0,1]$ with the restriction that
  $\epsilon'(a',\ast) = \epsilon'(\ast,b') = 1$ for all $a'\in A$ and
  $b'\in B$.
\end{rem}

\section{Conclusions}

\noindent We established a modal characterization result for
quantitative probabilistic modal logic, which states that every
formula of quantitative probabilistic first-order logic that is
\emph{bisimulation-invariant}, i.e.\ non-expansive w.r.t.\ a natural
notion of behavioural distance on probabilistic transition systems,
can be approximated by modal formulas of bounded modal rank, the bound
being exponential in the rank of the original formula. As discussed in
the introduction, the bound on the modal rank is the crucial feature
of this result. Put differently, on bisimulation-invariant properties,
quantitative probabilistic modal logic is as expressive as
quantitative probabilistic first-order logic, up to approximation in
bounded rank.

We leave several obvious open problems, the most prominent one being
whether our main result can be sharpened to state actual equivalence
of a given bisimulation-invariant first-order formula to a modal
formula rather than only approximability. (Wild et al.\ leave a
similar open problem for the case of fuzzy modal
logic~\cite{WildEA18}.) Moreover, we have already mentioned in
Remark~\ref{rem:rosen} that the version of our main result that
restricts the semantics to finite models, in analogy to Rosen's
finite-model version of van Benthem's theorem~\cite{Rosen97}, remains
open. Further directions for future research include lifting our
methods and results to a coalgebraic level of generality building on
existing work on coalgebraic behavioural
pseudometrics~\cite{km:bisim-games-logics-metric} (for the
quantitative setting; the crisp case has already been
established~\cite{SchroderPattinson10b,LitakEA12,SchroderEA17}), thus
covering, e.g., semiring weighted systems or weighted alternating-time
logics; a treatment of \L{}ukasiewicz semantics of the propositional
connectives; and a characterization theorem for the probabilistic
$\mu$-calculus providing a quantitative version of the
Janin-Wa\l{}ukiewicz theorem~\cite{JaninWalukiewicz95}, which would
characterize the probabilistic $\mu$-calculus within a suitable
quantitative probabilistic monadic second-order logic.


\bibliography{coalgml}



\end{document}